\renewcommand{\orcidID}[1]{\orcidlink{#1}}
\let\doendproof\endproof
\renewcommand\endproof{~\hfill$\qed$\doendproof}
\newcommand{\C}{\mathcal{C}}
\newcommand{\vc}{\tau}
\newcommand{\PoS}{{\rm{PoS} }}
\newcommand{\sw}{{\mathtt{sw}}}
\newcommand{\uw}{{\mathtt{uw}}}
\newcommand{\ew}{{\mathtt{ew}}}
\newcommand{\PP}{{\mathcal P}}
\newcommand{\DP}{{\rm{DP}}}
\newcommand{\diam}{\mathrm{diam}}
\newcommand{\dist}{\mathrm{dist}}
\newcommand{\cC}{\mathcal{C}}
\newcommand{\cT}{\mathcal{T}}
\newcommand{\cX}{\mathcal{X}}
\newcommand{\iso}{\mathtt{iso}}
\newcommand{\cl}{\mathtt{cl}}
\newcommand{\stc}{\mathtt{sc}}
\newcommand{\stcu}{\mathtt{scu}}
\newcommand{\stl}{\mathtt{sl}}
\newcommand{\typeB}{\mathtt{type}_B}
\newcommand{\typec}{\mathtt{type}_c}
\newcommand{\fin}{\mathtt{fin}}
\begin{document}
%
\title{Maximizing Utilitarian and Egalitarian Welfare of Fractional Hedonic Games \\
on Tree-like Graphs~\thanks{%
Partially supported
by JSPS KAKENHI Grant Numbers 
JP20H05967, 
JP21H05852, 
JP21K17707, 
JP21K19765, 
JP22H00513, 
and JP23H04388. 
}
}
\titlerunning{Utilitarian and Egalitarian Welfare of Fractional Hedonic Games}
%
\author{
Tesshu Hanaka\inst{1}\orcidID{0000-0001-6943-856X} \and
Airi Ikeyama\inst{2} \and
Hirotaka Ono\inst{2}\orcidID{0000-0003-0845-3947}
}
\authorrunning{Hanaka et al.}
%
\institute{Kyushu University, Fukuoka, Japan \email{hanaka@inf.kyushu-u.ac.jp} \and
Nagoya University, Nagoya, Japan \email{ikeyama.airi.f4@s.mail.nagoya-u.ac.jp,ono@nagoya-u.jp}}
%
%
\maketitle
%
\begin{abstract}
Fractional hedonic games are coalition formation games where a player's utility is determined by the average value they assign to the members of their coalition. These games are a variation of graph hedonic games, which are a class of coalition formation games that can be succinctly represented. Due to their applicability in network clustering and their relationship to graph hedonic games, fractional hedonic games have been extensively studied from various perspectives. However, finding welfare-maximizing partitions in fractional hedonic games is a challenging task due to the nonlinearity of utilities. In fact, it has been proven to be NP-hard and can be solved in polynomial time only for a limited number of graph classes, such as trees. 
This paper presents (pseudo)polynomial-time algorithms to compute welfare-maximizing partitions in fractional hedonic games on tree-like graphs. 
We consider two types of social welfare measures: utilitarian and egalitarian. Tree-like graphs refer to graphs with bounded treewidth and block graphs. 
A hardness result is provided, demonstrating that the pseudopolynomial-time solvability is the best possible under the assumption P$\neq$NP.
\keywords{Fractional hedonic game  \and treewidth \and block graphs.}
\end{abstract}
\section{Introduction}
\subsection{Definition and motivation}
The hedonic game \cite{Dreze1980} is a game of modeling coalition formation based on individual preferences. 
Graphical variants of hedonic games have been considered to express the preferences succinctly. In this work, we deal with a variant of the graphical hedonic games called a \emph{fractional hedonic game} (FHG), a subclass of hedonic games in which each agent's utility is the average of valuations over the other agents in the belonging coalition.
Similarly to ordinary graphical hedonic games, fractional hedonic games are coalition formation games on a graph, in which vertices represent agents and the weight of edge $(i, j)$ denotes the value that agent $i$ has for agent $j$.
Although fractional hedonic games in the most general setting are defined on weighted directed graphs, simpler versions of fractional hedonic games are well studied~\cite{Aziz2019}. For example, fractional hedonic games are said to be \emph{symmetric} and are represented on undirected graphs when all pairs of two agents are equally friendly. Furthermore, fractional hedonic games are said to be \emph{simple} and are represented on unweighted graphs when all of the edge weights are 1.
This work deals with fractional hedonic games on undirected graphs as \cite{Aziz2019}.

In fractional hedonic games, a coalition structure is represented by a partition of the vertices, where each set represents a coalition. 
Given a coalition structure, the utility of each agent is defined as the average weights of its incident edges, as explained above. This definition implies that if two coalitions contain an identical set of agents (vertices) adjacent to agent $v$, the sums of the edge weights incident to $v$ are equal, but the smaller coalition is more desirable for $v$. 
Such a property is suitable for finding a partition into dense subgraphs, which is why it is used for network clustering. 

Under this definition of individual utility, two significant measures are well considered in welfare maximization: utilitarian or egalitarian social welfare. The former measure is the sum of the utilities of all agents. The latter is the minimum utility among the utilities of all agents. Hence, we call the partitions that maximize utilitarian and egalitarian welfare the maximum utilitarian welfare coalition structure (partition) and the maximum egalitarian welfare coalition structure (or partition), respectively. 


Although many papers have already studied the computational complexity of finding the maximum utilitarian and egalitarian welfare coalition structure, positive results are few; even for restricted classes of graphs, they are NP-hard, and polynomial-time solvable classes are very restricted as summarized in Table~\ref{table:generalresult}. 
Noteworthy, the complexity of the problem for bounded treewidth remained open~\cite{DBLP:conf/swat/BazganCC22}. Such unwieldiness might be due to the nonlinearity of the objective function of the problem.

Therefore, this paper tries to enlarge solvable classes of graphs for the problems. In particular, we focus on block graphs and design a polynomial-time algorithm for computing the maximum utilitarian welfare coalition structure. Furthermore, we also focus on graphs with bounded vertex cover numbers and treewidth, for which we can design (pseudo)polynomial-time algorithms for finding the maximum utilitarian and egalitarian welfare coalition structures, which resolves the open problem left in \cite{DBLP:conf/swat/BazganCC22}. 
At the same time, finding the maximum egalitarian welfare partition is shown to be weakly NP-hard even for graphs with vertex cover number 4. The detailed results obtained in this paper are summarized in Our contribution.


\subsection{Our contribution}
This paper first shows a polynomial-time algorithm for computing the maximum utilitarian welfare partition on block graphs (Theorem~\ref{thm:block:sw:opt}). 
We see how a coalition structure forms in the maximum utilitarian welfare coalition structure on a block graph; by computing average utilities elaborately, we can show that there is a maximum utilitarian welfare coalition structure on a block graph in which every coalition forms a clique or a star.   
By utilizing the characterization, we design a dynamic programming-based algorithm that runs along a tree structure of a block graph.

We then focus on the complexity of maximizing utilitarian and egalitarian welfare for well-known graph parameters: treewidth and vertex cover number, which indicate how tree-like or star-like a graph is, respectively (Table~\ref{table:result}). 
For the utilitarian welfare maximization, we give an $(nW)^{O(\omega)}$-time algorithm and an $n^{O(\tau)}$-time algorithm where $\omega$ is the treewidth of $G$, $\tau$ is the vertex cover number of $G$, and $W$ is the maximum absolute weight of edges. 
This resolves the open question left in \cite{DBLP:conf/swat/BazganCC22} of whether the utilitarian welfare maximization on unweighted fractional hedonic games (equivalently, \textsc{Dense Graph Partitioning}) can be solved in polynomial time on bounded treewidth graphs.
We mention that it remains open whether there exists a polynomial-time algorithm on \emph{weighted} bounded treewidth graphs though we can design the one on bounded vertex cover number graphs.

For the egalitarian welfare maximization, we also give an $(nW)^{O(\omega)}$-time algorithm for treewidth $\omega$.
We then show that in contrast to utilitarian welfare maximization, egalitarian welfare maximization is NP-hard on bounded vertex cover number graphs; it implies that the pseudopolynomial-time solvability is best possible under the assumption P$\neq$NP.

\begin{table}[tb]
\begin{center}
\caption{Complexity of the utilitarian and egalitarian welfare maximization on graph classes.}
\label{table:generalresult}
\begin{tabular}{l|l||l}
\hline
objective & graph class & complexity  \\
\hline
\hline
utilitarian  & general, unweighted & NP-hard \cite{DBLP:conf/ijcai/AzizGGMT15,Bilo2018} \\
 & cubic graphs, unweighted &  NP-hard \cite{DBLP:conf/swat/BazganCC22} \\
& $\delta\ge n-3$, unweighted & P \cite{DBLP:conf/swat/BazganCC22}  \\
& bipartite, unweighted &  NP-hard \cite{DBLP:conf/swat/BazganCC22} \\
 & block, unweighted & P [Thm.~\ref{thm:block:sw:opt}] \\
 & tree, unweighted & P~\cite{Bilo2018} \\  \hline
 egalitarian & general, unweighted & NP-hard \cite{DBLP:conf/ijcai/AzizGGMT15} \\
 \hline
\end{tabular}
\end{center}
\end{table}

\begin{table}[tb]

\begin{center}
\caption{Complexity of the utilitarian and egalitarian welfare maximization with graph parameters.}
\label{table:result}
\begin{tabular}{l||l|l|l}
\hline
  objective & parameter & unweighted & weighted   \\
\hline
\hline
 utilitarian & treewidth & $n^{O(\omega)}$ [Thm.~\ref{thm:treewidth:weighted}] & $(nW)^{O(\omega)}$ [Thm.~\ref{thm:treewidth:weighted}]    \\ 
  & vertex cover number & $n^{O(\vc)}$ [Thm.~\ref{thm:vc:weighted}] &  $n^{O(\vc)}$ [Thm.~\ref{thm:vc:weighted}] \\ \hline
   
    egalitarian   & treewidth & $n^{O(\omega)}$ [Thm.~\ref{thm:max-min:weighted}] & $(nW)^{O(\omega)}$ [Thm.~\ref{thm:max-min:weighted}]    \\
   & vertex cover number & $n^{O(\vc)}$ [Cor.~\ref{cor:Egal:unweighted:vc}] &  paraNP-hard [Thm.~\ref{thm:Egal:vc}] \\

 \hline
\end{tabular}
    \end{center}
\end{table}

\subsection{Related work}
Fractional hedonic games have been studied from several aspects, including complexity, algorithm, and stability. 

\subsubsection{Complexity and algorithm}
From the algorithmic point of view, a coalition structure maximizing utilitarian welfare or egalitarian welfare is NP-hard to compute~\cite{DBLP:conf/ijcai/AzizGGMT15,Bilo2018}.
On the other hand, computing a coalition structure with maximum utilitarian welfare is shown to be solvable in polynomial time only for a few graph classes, such as trees~\cite{Bilo2018}. 
Furthermore, a problem equivalent to computing the maximum utilitarian welfare coalition structure in fractional hedonic games is studied under a different name; Dense graph partitioning~\cite{DARLAY20122389,DBLP:conf/swat/BazganCC22}. 
Dense graph partitioning is the problem of finding a partition with maximum density for a given graph. From the study on Dense graph partitioning, it is known that computing the maximum utilitarian welfare coalition structure is NP-hard even for cubic graphs~\cite{DBLP:conf/swat/BazganCC22}. 
Table~\ref{table:generalresult} shows the complexity of computing the maximum utilitarian welfare coalition structure and computing the maximum egalitarian welfare coalition structure. $\delta$ denotes the minimum degree of the input graph. 

As a variant of fractional hedonic games, fractional hedonic games with the restriction that there is a specific upper bound $k$ on the number of coalitions that can be formed have also been studied~\cite{li2021fractional}.
For all fixed $k\ge2$, it remains NP-hard to find a maximum utilitarian welfare coalition structure with $k$ coalition for undirected unweighted graphs~\cite{li2021fractional}. For undirected unweighted trees, we can compute a maximum utilitarian welfare coalition structure with $k$ coalition in time $O(nk)$~\cite{li2021fractional}. 
Futhermore, as a variant of fractional hedonic games, social distance games~\cite{Brnzei2011SocialDG} have also been studied in which each agent gets benefits from all agents belonging to the same coalition, not only from adjacent agents as in fractional hedonic games. 
It is known that to compute a coalition structure that maximizes the sum of the utility in social distance games is NP-hard~\cite{Brnzei2011SocialDG}. 
\subsubsection{Stability}
A motivation to study computing the maximum social welfare is that it is often used to measure the goodness of a coalition structure satisfying a certain property. One of the most well-used properties is stability.  

The coalition structure is said to be stable when no agent increases its utility by moving to another coalition. A stable coalition structure can be considered a state in which each agent is satisfied with the coalition structure. 
On the other hand, the maximum utilitarian welfare coalition structure and the maximum egalitarian coalition structure, which are generally considered socially desirable coalition structures, are not necessarily stable. 
The Price of Stability (\PoS)~\cite{Anshelevich2008} is a measure of the gap between the stable coalition structure and the maximum utilitarian welfare coalition. The price of stability is defined as the total utility of the maximum utilitarian welfare coalition structure divided by the total utility of the coalition structure that maximizes utilitarian welfare in stable coalition structures. It represents at least how much utilitarian welfare is affected to achieve stability.
A lower bound of $2.2247$ is known for the \PoS in the general graph~\cite{Kaklamanis2020}. On the other hand, only $n-1$ is shown for the upper bound~\cite{Bilo2018}. Since the gap between the upper and lower bounds known is very large, many researchers try to fill the gap by restricting graph classes. 
For example, it is known that for graphs with girth (shortest closed path length) greater than 4, 
a lower bound $1.0025$ and an upper bound $2.5714$ of \PoS are known~\cite{Bilo2018}. 
For bipartite graphs, 
a lower bound $1.0025$ and an upper bound $1.0294$ are known~\cite{Bilo2018}.
It is also known that for graphs with an even larger girth, i.e., with a girth at lease 5, \PoS is 1 and there is always a stable and optimal solution~\cite{Kaklamanis2020}.

As with other stability concepts, Pareto-Optimality in fractional hedonic games has been studied~\cite{10.5555/3398761.3398791, 10.5555/3398761.3399119}. In simple symmetric fractional hedonic games in which they are equally friendly to each other, and all of the edge weight is 1, finding a Pareto-optimal partition can be done in polynomial time~\cite{10.5555/3398761.3398791}. 
Recently, the notion of popularity is getting a lot of attention. A partition is popular if there is no other partition in which more agents get better than worse. 
Computing popular, strongly popular, and mixed popular partitions in symmetric fractional hedonic games are studied in~\cite{brandt2022finding}. 

\medskip 


The rest of the paper is organized as follows. Section \ref{sec:pre} introduces definitions, notations, and sees the basic properties of fractional hedonic games. Section \ref{chap:block:characterization} presents a characterization of a coalition structure with the maximum utility on a block graph. Section \ref{chap:block:algorithm} presents an algorithm to compute a coalition structure with the maximum utility on a block graph. Section \ref{chap:XP} is for graphs with bounded treewidth or vertex cover number. We present (pseudo)polynomial-time algorithms and give an NP-hardness proof.

\section{Preliminaries}\label{sec:pre}

\subsection{Definitions, terminologies, and notation}
Let $G=(V,E)$ be an undirected graph. We denote by $G[C]=(C,E(C))$ the subgraph of $G$ induced by $C\subseteq V$ and by $N_G(v)$ the set of neighbors of $v$ in $G$.
The degree $d_G(v)$ of $v$ is defined by $d_G(v)=|N_G(v)|$. We denote by $\Delta_G$ the maximum degree of $G$.
The \emph{distance} of $u$ and $v$ is defined by the length of the shortest path between them and denoted by $\dist(u,v)$.
The \emph{diameter} of $G$ denote with $\diam(G)$ and  $\diam(G)=\max_{u,v\in V} \dist(u,v)$. 
For a connected graph $G$, a vertex $v$ is called a \emph{cut vertex} if the graph obtained by deleting $v$ is disconnected.
Similarly, for a connected graph $G$, an edge $e$ is called a \emph{bridge} if the graph obtained by deleting $e$ is disconnected.

An \emph{isolated vertex} is a vertex with degree zero. A vertex set $K$ is called a \emph{clique} if $G[K]$ is a complete graph. A vertex set $I$ is called an \emph{independent set} if no two vertices in $I$ are adjacent. For a vertex set $V=\{v_1,\ldots,v_n\}$, a graph whose edge set is $E=\{\{v_1,v_j\} \mid 2\le j \le n\}$ is called a \emph{star}. The vertex $v_1$ is called the center of the star, and other vertices are called leaves. 


A \emph{block graph} is a graph whose every biconnected component is a clique. 

\subsection{Graph parameters}
A vertex set  $S\subseteq V(G)$ is a \emph{vertex cover} if every edge has at least one endpoint in $S$. The \emph{vertex cover number} $\vc(G)$ is the size of a minimum vertex cover.

\emph{Treewidth} is a parameter that represents how close a graph is to a tree.  It is defined by the minimum \emph{width} among all the \emph{tree decomposition} of $G$ as follows. 

\begin{definition}
A tree decomposition of a graph $G$ is a pair $\mathcal{T}=(T,\{X_t\}_{t\in V(T)})$ such that each bag $X_t\subseteq V(G)$ is associated with node $t$ of a tree $T$, which satisfies the following condition:
\begin{itemize}
 \item 
 $\bigcup_{t\in V(T)}X_t=V(G)$,
 \item for any edge $\{u,v\}\in E(G)$, there exists a node $t$ of $T$ such that $\{u,v\}\subseteq X_t$, and 
 \item for any vertex $u\in V(G)$, the subtree of $T$ induced by $\{t\in V(T)\mid u\in X_t\}$ is connected.
\end{itemize}

The width of a tree decomposition $\mathcal{T}$ is defined by $\max_{t\in V(T)}|X_t|-1$. The treewidth $\omega(G)$ of $G$ is the minimum width among all the tree decompositions of $G$.
\end{definition}

A tree decomposition $\mathcal{T}=(T,\{X_t\}_{t\in V(T)})$ of $G$ is \emph{nice} if it satisfies the following conditions:
\begin{enumerate}
\item $T$ is rooted at a designated node $r\in V(T)$ and $|X_r|=0$,
\item for every leaf $l$ of $T$, $|X_l|=0$, and
\item each internal node has one of the following types:
\begin{description}
    \item[(\textbf{Introduce node})] A node $t$ has only one child $t'$ and it satisfies $X_t=X_{t'}\cup\{v\}$ for some vertex $v\notin X_{t'}$.
    \item[(\textbf{Forget node})] A node $t$ has only one child $t'$ and it satisfies $X_t=X_{t'}\setminus\{v\}$ for some vertex $v\in X_{t'}$.
    \item[(\textbf{Join node})] A node $t$ has exactly two children $t_1,t_2$ and it satisfies $X_t=X_{t_1}=X_{t_2}$.
\end{description}
\end{enumerate}

For a node $t\in V(T)$, we denote by $G_t=(V_t,E_t)$  the subgraph of $G$ induced by the union of bags of the subtree of $\mathcal{T}$ rooted by $t$.
Note that $G_r = G$.
It is well-known that given a tree decomposition of width at most $\omega$ in $G$, one
can compute a nice tree decomposition of width at most $\omega$ having at most $O(\omega|V|)$ nodes in polynomial time \cite{ParameterizedAlgorithms}.

\subsection{Fractional Hedonic Game}
A fractional hedonic game is defined on a weighted and directed graph $G=(V,E,w)$, whose weights represent preferences. Without loss of generality, we suppose no edge has weight $0$. 

In this paper, we consider a \emph{symmetric} fractional hedonic game, which is defined on an undirected graph. 
If all of the edge weight is 1, a fractional hedonic game is called \emph{simple} and it is defined on an unweighted graph.

A partition $\C$ of $V$ is called a \emph{coalition structure}. A vertex set $C\in \C$ is called a coalition.
The utility $U(v,C)$ of a vertex $v$ that belongs to a coalition $C$ is defined by the sum of edge weights of neighbors of $v$ in $C$  divided by $|C|$, i.e., 
\begin{align*}
    U(v,C)=\frac{\sum_{u\in N_{G[C]}(v)}w_{uv}}{|C|}.
\end{align*}
The \emph{utilitarian welfare} $\uw(C)$ of $C\in \C$ and the one $\uw(\C)$ of a coalition structure $\C$ are defined by the sum of the utility of each $v\in C$ and the sum of the utility of each vertex in $G$, respectively:
\begin{align*}
    &\uw(C)=\sum_{v\in C}U(v,C),\\
    &\uw(\C)=\sum_{C\in \C}\uw(C)=\sum_{C\in \C}\sum_{v\in C}U(v,C).
\end{align*}
Next, the \emph{egalitarian welfare} $\ew(C)$ of $C\in \C$ and the one $\ew(\C)$ of a coalition structure $\C$ are defined by the minimum utility of $v\in C$ and the minimum utility of $v\in V$ under $\C$, respectively:
\begin{align*}
    &\ew(C)=\min_{v\in C}U(v,C), \\
    &\ew(\C)=\min_{v\in C, C\in \C}U(v,C).
\end{align*}

In this paper, we consider the problem to find a maximum utilitarian welfare coalition structure and  the problem to find a maximum egalitarian welfare coalition structure.
We use $C^*$ as an optimal coalition structure.


For fractional hedonic games, the following basic properties hold.
\begin{property}[\cite{Bilo2018}]\label{prop:Bilo}
In a symmetric fractional hedonic game,  $\uw(C)=2|E(C)|/|C|$ holds for any coalition $C$.
\end{property}


\begin{property}\label{property:clique}
In a simple symmetric fractional hedonic game, if a coalition $C$ forms a clique of size $k$, $\uw(C)=k-1$.
\end{property}


\begin{property}\label{property:star}
In a simple symmetric fractional hedonic game, if a coalition $C$ forms a star of size $k$, $\uw(C)={2(k-1)}/{k}$.
\end{property}

\section{Maximizing utilitarian welfare on block graphs: Characterization}\label{chap:block:characterization}
In this section, we characterize an optimal coalition structure on block graphs, and show that there exists a maximum utilitarian welfare coalition structure on the block graph in which each coalition induces a clique or a star. 
This characterization is used for designing a polynomial-time algorithm on block graphs.
\begin{theorem}\label{thm:block:opt}
There exists a maximum utilitarian welfare coalition structure on the block graph in which each coalition induces a clique or a star. 
\end{theorem}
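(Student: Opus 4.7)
The plan is to start from an arbitrary maximum-utilitarian-welfare coalition structure $\mathcal{C}^*$ and iteratively transform it into an equivalent optimum in which every coalition induces a clique or a star. The argument is organised as a strong induction on $|C|$, preceded by a \emph{connectivity reduction}: by the elementary inequality $\sum_i a_i/b_i\ge (\sum_i a_i)/(\sum_i b_i)$ applied to $a_i=|E(C_i)|$ and $b_i=|C_i|$, splitting a disconnected coalition $C$ into its connected components is never welfare-decreasing and is strictly welfare-increasing whenever two of them contain an edge. Hence we may assume each coalition in $\mathcal{C}^*$ induces a connected subgraph.

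Now fix a coalition $C\in\mathcal{C}^*$ with $G[C]$ connected, and suppose $G[C]$ is neither a clique nor a star. Since $G$ is a block graph, so is $G[C]$, and I case-split on its block decomposition. If every block of $G[C]$ has size $2$ then $G[C]$ is a tree; since it is not a star, it contains an internal edge $\{u,v\}$ with both endpoints of degree at least $2$. Splitting $C$ at this edge into subtrees $T_u,T_v$ of sizes $a,b\ge 2$ yields, by Property~\ref{prop:Bilo},
\[
\uw(T_u)+\uw(T_v)-\uw(C)=2\Bigl(1-\tfrac{1}{a}-\tfrac{1}{b}+\tfrac{1}{a+b}\Bigr)>0,
\]
contradicting optimality of $\mathcal{C}^*$. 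Thus in the optimum $G[C]$ must contain a block $B$ of size $k\ge 3$.

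In that case I split $C$ into $C_1=B$ and $C_2=C\setminus B$, writing $n:=|C|$. The crucial structural observation is that in a block graph every vertex of $C\setminus B$ has at most one neighbour in $B$ (otherwise $B$ could be extended by that vertex while staying biconnected, contradicting maximality), so the number $d$ of edges between $B$ and $C\setminus B$ satisfies $d\le n-k$. Combining this with $|E(C)|\ge\binom{k}{2}+d$ and Property~\ref{prop:Bilo} and simplifying gives
\[
\uw(C_1)+\uw(C_2)-\uw(C)\;\ge\;\frac{(k-1)(n-k)-2d}{n}\;\ge\;0,
\]
with strict inequality whenever $k\ge 4$. So in the optimum every block of $G[C]$ has size at most $3$, the displayed split is welfare-preserving, and it replaces $C$ by the clique $B$ together with the strictly smaller coalition $C\setminus B$, to which the connectivity reduction and the induction hypothesis are re-applied.

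The main obstacle I anticipate is the equality case $k=3$: since the split is only weakly improving, optimality alone does not yield a contradiction, and the proof must be organised as a welfare-preserving transformation with a strictly decreasing monovariant ($|C\setminus B|<|C|$) rather than as a proof by contradiction. Verifying the block-graph crossing-edge bound $d\le n-k$ is the one genuinely structural ingredient in the welfare calculation; the remaining algebra is routine, and the two reductions must be interleaved carefully so that the induction terminates.
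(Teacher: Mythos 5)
Your proof is correct, but it is organized quite differently from the paper's. The paper splits the argument by diameter: Lemma~\ref{lem:opt:diam3} reduces any coalition of diameter at least $3$ to pieces of diameter at most $2$ (treating separately the case of a bridge joining two components of size at least $2$, and otherwise peeling off a \emph{leaf} clique of the block-cut tree so that the remainder stays connected and contains a cycle), and Lemma~\ref{lem:opt:diam2} then breaks a diameter-$2$ coalition, which has a single cut vertex, into cliques and stars via two subcases. You instead run a strong induction on $|C|$ after a connectivity reduction: a connected non-star tree admits a \emph{strictly} improving split at an internal edge (contradicting optimality, which is preserved under your weakly improving moves), and otherwise you detach an \emph{arbitrary} block $B$ of size $k\ge 3$ in one piece, with the single uniform estimate $\uw(B)+\uw(C\setminus B)-\uw(C)\ge\bigl((k-1)(n-k)-2d\bigr)/n\ge (k-3)(n-k)/n\ge 0$ resting on the structural bound $d\le n-k$ (each vertex outside $B$ has at most one neighbour in $B$, by maximality of the block). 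What each approach buys: your crossing-edge bound collapses the paper's several case-specific computations (the bridge case and the star-remainder subcase of Lemma~\ref{lem:opt:diam2}) into one inequality, at the price of needing the connected-components reduction since $C\setminus B$ may be disconnected; the paper's leaf-block choice keeps the remainder connected and makes the clique-or-star endgame explicit. Two small remarks: your side claim that ``in the optimum every block of $G[C]$ has size at most $3$'' is neither needed nor quite right as stated (a coalition that is itself a large clique has a large block), but your own reorganization as a welfare-preserving transformation with the monovariant $|C\setminus B|<|C|$ correctly sidesteps the $k=3$ equality case; and the contradiction in the tree case is legitimate precisely because every preceding move is welfare-non-decreasing, so the intermediate structures remain optimal.
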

To show this, we show the following two lemmas.
\begin{lemma}\label{lem:opt:diam2}
Let $C$ be a coalition such that the diameter of $G[C]$ is 2. Then  $C$ can be partitioned into cliques or stars without decreasing utilitarian welfare.
\end{lemma}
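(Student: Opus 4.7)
The plan is to begin by characterizing the structure of $G[C]$. Since induced subgraphs of block graphs are block graphs, $G[C]$ is a connected block graph of diameter $2$. I will argue that if $G[C]$ has more than one block, then all its blocks must share a common cut vertex $v$: otherwise, the block-cut tree of $G[C]$ would contain at least two distinct cut vertices, and walking from a private vertex of one ``end'' block to a private vertex of another would cost at least $3$ edges, contradicting the diameter hypothesis. Consequently $G[C]$ is either a single clique or a ``bouquet'' of maximal cliques $K_1, \ldots, K_r$ glued at a common vertex $v$, where $|K_i| = k_i + 1$ and $k_i \geq 1$.

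Based on this shape I split into three cases. If $G[C]$ is a single clique, or if every $K_i$ is an edge (so $k_1 = \cdots = k_r = 1$ and $G[C]$ is the star $K_{1,r}$), then $C$ itself is already a clique or a star, and the trivial partition $\{C\}$ works. The substantive case is when $r \geq 2$ and some block has size at least $3$; without loss of generality $k_1 \geq 2$. In this case I propose the partition $\{K_1,\; K_2 \setminus \{v\},\; \ldots,\; K_r \setminus \{v\}\}$: the central vertex $v$ joins the largest block, and each remaining $K_i \setminus \{v\}$ is a clique of size $k_i$. Every part induces a clique, so the partition is of the required form.

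The main work---and the step I expect to be the principal obstacle---is verifying that this partition does not lose welfare. By Properties~\ref{prop:Bilo} and~\ref{property:clique}, the new partition yields utility $k_1 + \sum_{i \geq 2}(k_i - 1) = S - (r-1)$, where $S = \sum_i k_i$, whereas $\uw(C) = (Q+S)/(S+1)$ with $Q = \sum_i k_i^2$. Clearing denominators and using the identity $S^2 - Q = 2 \sum_{i<j} k_i k_j$, the required inequality reduces to
\begin{equation*}
2 \sum_{i < j} k_i k_j \;\geq\; (r-1)(S+1).
\end{equation*}
I plan to establish this by first verifying the extremal instance $k_2 = \cdots = k_r = 1$, where it collapses to the condition $k_1 \geq 2$ (with equality precisely when $k_1 = 2$), and then observing that incrementing any $k_j$ by $1$ raises the left-hand side by $2(S - k_j) \geq 2(r-1)$ while raising the right-hand side only by $r-1$. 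The inequality is therefore preserved as we build up to the general configuration, which completes the argument.
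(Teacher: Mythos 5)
Your proposal is correct, but it takes a genuinely different route from the paper's proof. Both arguments start from the same structural fact (a diameter\nobreakdash-2 coalition in a block graph is a bouquet of maximal cliques sharing a single cut vertex $v$) and both use $\uw(C)=2|E(C)|/|C|$, but the decompositions go in opposite directions. The paper splits off $C'=C_x\setminus\{v\}$, i.e.\ it removes the largest clique \emph{minus} the hub and leaves $v$ with everything else; it then compares welfare one split at a time, with a sub-case analysis on whether the residual part still contains a clique of size at least $3$ or has degenerated into a star, and iterates until all parts are cliques or stars. You instead perform a single one-shot partition in which $v$ is absorbed into a block of size at least $3$ and every other block is stripped of $v$, which collapses the entire verification into the one symmetric inequality $2\sum_{i<j}k_ik_j\ge(r-1)(S+1)$; your base-case-plus-increment argument for it is correct (the base case $(k_1,1,\dots,1)$ gives exactly the slack $(r-1)(k_1-2)\ge 0$, and each increment adds $2(S-k_j)\ge 2(r-1)$ to the left side versus $r-1$ to the right). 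Your route is arguably cleaner --- no iteration and no case split on the shape of the residual --- at the cost of a slightly heavier algebraic identity. One point worth a sentence in a final write-up: when some $k_i=1$ for $i\ge 2$, your partition produces singleton parts ($K_i\setminus\{v\}$ is one vertex), whereas the paper's iterative scheme never does; a singleton is a clique of size $1$, so this is consistent with the lemma and with the downstream algorithm (which has an explicit $\iso$ state), but you should say so explicitly since the target structure is ``cliques or stars.''
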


\begin{proof}
We first show that a coalition $C$ that induces a block graph having diameter 2 can be partitioned into cliques or stars without decreasing utilitarian welfare.
The subgraph $G[C]$ induced by a coalition $C$ having diameter 2 contains exactly one cut vertex $v$ and each maximal clique in $G[C]$ contains $v$.
Let $C_x$ be the largest clique in $G[C]$ and its size $x$. When $x=2$, $G[C]$ is a star. Thus, we consider the case $x\ge 3$ in the following. 
Let $C'=C_x\setminus\{v\}$ and $C^{\prime\prime}=C\setminus C'$. We show that $\uw(C')+\uw(C^{\prime\prime}) \ge \uw(C)$.

Let $y$ be the number of vertices and $z$ be the number of edges in  $G[C^{\prime\prime}]$. Then the utilitarian welfare of $C$ is $\uw(C)=2((x(x-1)/2)+z)/(x+y-1)=(x(x-1)+2z)/(x+y-1)$. On the other hand, $\uw(C')=(x-1)-1=x-2$ and $\uw(C^{\prime\prime})=2z/y$. Thus,  \begin{align}\label{align:opt:diam2}(\uw(C')+\uw(C^{\prime\prime}))-\uw(C) = (y^2(x-2)+2(x-1)(z-y))/y(x+y-1).\end{align}
If $G[C^{\prime\prime}]$ contains a clique of size at least 3, then $z-y\ge 0$. Since $x\ge 3$, we have $(\uw(C')+\uw(C^{\prime\prime}))\ge \uw(C)$. 
Otherwise, $G[C^{\prime\prime}]$ is a star. By Property \ref{property:star}, $\uw(C^{\prime\prime}) = 2(y-1)/y$.
Thus, $(\uw(C')+\uw(C^{\prime\prime}))-\uw(C) = x-2 +  2(y-1)/y - 2((x(x-1)/2)+y-1)/(x+y-1)
=x^2-2x +xy-2y -2x +2 +2(x+y-1) - 2(x+y-1)/y -x(x-1) + 2y -2 =  -x +xy  + 2y- 2(x-1)/y -4 \ge 1$ if $x\ge 3$ and $y\ge 2$.
For both cases, we can partition $C$ into two coalitions without decreasing utilitarian welfare. By repeating the partitioning procedure, we can partition $C$ into stars and cliques.
\end{proof}



Then we prove that a coalition $C$ with diameter at least 3 can be partitioned into stars and cliques as in Lemma~\ref{lem:opt:diam2}. This implies the existence of an optimal coalition structure consisting of stars and cliques.
\begin{lemma}\label{lem:opt:diam3}
Let $C$ be a coalition such that the diameter of $G[C]$ is at least 3. Then  $C$ can be partitioned into coalitions of diameter at most 2 without decreasing utilitarian welfare.
\end{lemma}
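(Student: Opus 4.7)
I will prove the lemma by strong induction on $|C|$, exhibiting at each step a nontrivial bipartition $C=C_1\sqcup C_2$ with both parts strictly smaller than $C$ and $\uw(C_1)+\uw(C_2)\geq\uw(C)$; iterating such splits and stopping on parts of diameter at most $2$ yields the desired partition. Since $\diam(G[C])$ is finite, $G[C]$ is connected; since induced subgraphs of block graphs are block graphs, $G[C]$ is itself a block graph, and because $\diam(G[C])\geq 3$ its block-cut tree contains more than one block. I case-split on the bridges of $G[C]$.

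\textbf{Case A:} some bridge of $G[C]$ has two sides of size $\geq 2$. Let $\{u,v\}$ be such a bridge and let $C_1,C_2$ be the resulting components. Each side is connected of size $\geq 2$, so $|E(C_i)|\geq 1$. A direct expansion of the welfares yields
\[
\Delta:=\uw(C_1)+\uw(C_2)-\uw(C)=\frac{2}{|C|}\left(\frac{|E(C_1)|\,|C_2|}{|C_1|}+\frac{|E(C_2)|\,|C_1|}{|C_2|}-1\right),
\]
and by AM-GM the bracket is at least $2\sqrt{|E(C_1)|\,|E(C_2)|}-1\geq 1$, so $\Delta\geq 0$.

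\textbf{Case B:} every bridge of $G[C]$ is a pendant edge. If $G[C]$ had no block of size $\geq 3$, it would be a tree, but a tree of diameter $\geq 3$ contains an internal edge whose removal leaves two subtrees of size $\geq 2$, contradicting the case hypothesis. Hence $G[C]$ has a cycle, so $|E(C)|\geq |C|$. Picking any pendant vertex $u$ and setting $C_1=\{u\}$, $C_2=C\setminus\{u\}$ gives $\Delta=2(|E(C)|-|C|)/[|C|(|C|-1)]\geq 0$.

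\textbf{Case C:} $G[C]$ has no bridge, so every block has size $\geq 3$. Take a leaf block $B$ of the block-cut tree of $G[C]$ with cut vertex $c$ and $|B|=b\geq 3$, and set $C_1=B$, $C_2=C\setminus B$ (nonempty since $\diam(G[C])\geq 3$). With $n=|C|$, $e_2=|E(C_2)|$, and $d'$ the number of neighbors of $c$ outside $B$, expansion gives $n(n-b)\Delta=(b-1)(n-b)^2+2be_2-2(n-b)d'$. Since $d'\leq n-b$, it suffices to verify $(b-3)(n-b)^2+2be_2\geq 0$, which holds because $b\geq 3$. The main obstacle is Case C: no bridge is available and one must split off an entire leaf block, making the term $-2(n-b)d'$ potentially large; the slack $b\geq 3$ forced by the absence of bridges is exactly what closes the inequality.
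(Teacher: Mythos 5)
Your Cases A and B are correct and your algebra checks out throughout. Case A coincides with the paper's first case (splitting at a bridge whose two sides both have size at least $2$); your Cases B and C together replace the paper's second case, in which the paper peels off $C_x\setminus\{v\}$ from the largest leaf block $C_x$ with cut vertex $v$ — so the cut vertex stays with the remainder — and reuses the identity from the diameter-$2$ lemma together with the observation that the remainder contains a cycle. Your Case C split is genuinely different: you remove the entire leaf block $B$ including its cut vertex $c$, and you close the inequality using $b\ge 3$, which is indeed forced by the absence of bridges.

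There is, however, a gap in Case C. When the cut vertex $c$ of the leaf block $B$ lies in three or more blocks of $G[C]$ (for instance, several triangles glued at $c$, one of which carries a further triangle so that the diameter reaches $3$), the part $C_2=C\setminus B$ is disconnected. Your welfare inequality is unaffected, since $\uw(C_2)=2|E(C_2)|/|C_2|$ does not require connectivity, but your induction does not apply to $C_2$: you declared at the outset that $\diam(G[C])$ is finite and hence $G[C]$ is connected, so the induction hypothesis covers only connected coalitions, and a disconnected $C_2$ has no finite diameter and falls outside all three of your cases. The fix is easy — either add the observation that a disconnected coalition can always be split into its connected components without decreasing welfare (if there are no edges between the parts then $\frac{2(E_1+E_2)}{n_1+n_2}\le \frac{2E_1}{n_1}+\frac{2E_2}{n_2}$), and then recurse on each component, or follow the paper's choice and split off $B\setminus\{c\}$ rather than $B$, which keeps the remainder connected. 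As written, the induction does not close.
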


\begin{proof}

We first consider the case that $G[C]$ has a bridge $e=\{u,v\}$ and it connects two connected components of size at least 2. Let $C_u$ and $C_v$ be the connected components of $G[C]-e$ that contains $u$ and $v$, respectively. Then we can partition $C$ into two coalitions $C_u$ and $C_v$ without decreasing the utilitarian welfare.
Let $y_u$ and $z_u$ be the number of vertices and the number of edges in $G[C_u]$ and let $y_v, z_v$ be the number of vertices and  the number of edges in $G[C_v]$. 
Then the utilitarian welfare of $C$ is $\uw(C)=2(z_u+z_v+1)/(y_u+y_v)$.
On the other hand, the utilitarian welfare of $C_u$ is $\uw(C_u)=2z_u/y_u$  and the one of $C_v$ is $\uw(C_v)=2z_v/y_v$. Then, we have $(\uw(C_u)+\uw(C_v))-\uw(C)= {2z_u}/{y_u}+{2z_v}/{y_v}-{2(z_u+z_v+1)}/{(y_u+y_v)}
    = {(2y_u(y_u z_v-y_v)+2y_v^2 z_u)}/{y_u y_v(y_u+y_v)}\ge 0$.
The last inequality follows from $y_u,y_v\ge 2$ and  $z_v\ge y_v-1$. 
Thus, $C$ can be partitioned into two coalitions without decreasing the utilitarian welfare. 

Next, we consider the case that there is no bridge that connects two connected components of size at least 2 in $G[C]$.
Since $G$ is a block graph, $G[C]$ is also a block graph. Let $C_x$ be the largest clique of size $x(\ge 2)$ with only one cut vertex $v$ in $G[C]$. Such $v$ always exists unless $G[C]$ is not a clique. In other words, $C_x$ is a maximum-size leaf on the block-cut tree\footnote{The definition of a block-cut tree appears in Section~\ref{subsec:block-cut}} of $G[C]$. 
Let $C^{\prime}=C_x\setminus\{v\}$ and $C^{\prime\prime}=C\setminus C_x\cup\{v\}$. Then we show that $\uw(C^{\prime})+\uw(C^{\prime\prime})\ge\uw(C)$.

Let $y$ be the number of vertices and $z$ be the number of edges in $G[C^{\prime\prime}]$. 
Figure~\ref{fig:nobridge} shows the value of $x,y$ and $z$ with respect to the examples of $C^{\prime}$ and $C^{\prime\prime}$.
\begin{figure}[htb]
\centering
\includegraphics[width=0.5\textwidth]{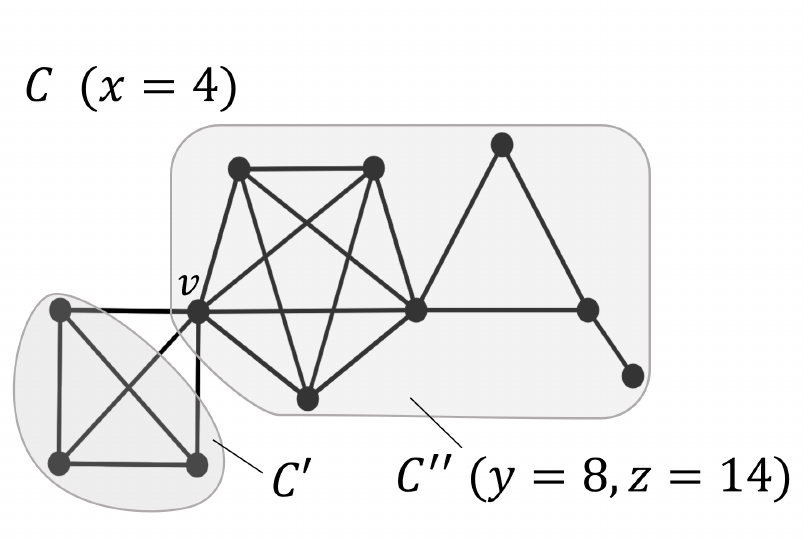}
\caption{The value of $x,y$ and $z$ with respect to the examples of $C^{\prime}$ and $C^{\prime\prime}$.} \label{fig:nobridge}
\end{figure}
By the same argument as in Lemma~\ref{lem:opt:diam2}, we obtain Equation~\eqref{align:opt:diam2}. Since $x\ge 2$ and $y\ge 2$, the sign of $(y^2(x-2)+2(z-y)(x-1))/y(x+y-1)$ depends on the sign of $z-y$.
Since the diameter of $G[C]$ is at least 3, $C_x$ forms a clique, and there is no bridge that connects two connected components of size at least 2 in $G[C]$, $G[C^{\prime\prime}]$ contains a cycle. Thus, we have $z-y\ge 0$. Therefore, we obtain $\uw(C^{\prime})+\uw(C^{\prime\prime})\ge\uw(C)$.


By performing these operations unless the diameter of  $G[C]$ is at most 2, we can partition $C$ into coalitions whose diameter is 2 without decreasing utilitarian welfare. 
\end{proof}

By combining Lemmas~\ref{lem:opt:diam2} and \ref{lem:opt:diam3}, we can obtain a coalition structure whose coalitions form stars or cliques from any coalition structure. This implies Theorem~\ref{thm:block:opt}.

\section{Maximizing utilitarian welfare on block graphs: Algorithm}\label{chap:block:algorithm}
In this section, we show \textsc{Utilitarian Welfare Maximization} on unweighted block graphs can be computed in polynomial time. We give a polynomial-time algorithm based on dynamic programming using optimal coalition structures shown in Theorem~\ref{thm:block:opt}. 

\begin{theorem}\label{thm:block:sw:opt}
\textsc{Utilitarian Welfare Maximization} on unweighted block graphs can be computed in time $O(n\Delta^4)$.
\end{theorem}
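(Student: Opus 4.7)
The approach is dynamic programming on the block-cut tree $T$ of $G$, exploiting Theorem~\ref{thm:block:opt}: since each block of $G$ is a clique, every clique coalition of an optimal partition fits inside one block, and every star coalition has its center in one block with leaves distributed across the blocks containing that center.

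First, I would build $T$ in linear time and root it at an arbitrary block, then process $T$ bottom-up. For each block $B$ with parent cut vertex $p_B$, I would compute a table $g(B,\sigma)$ giving the maximum welfare obtainable from the subtree of $B$ given a state $\sigma$ that describes the role of $p_B$ in its coalition restricted to $B$; the relevant states are ``absent'', ``in a clique of size $k$ in $B$'', ``center of a star with $a$ leaves from $B$ (possibly more from other blocks of $p_B$)'', or ``leaf of a star with center in $B$''. Symmetrically, for each cut vertex $v$ I would compute $f(v,\tau)$ indexed by whether $v$'s coalition is already closed inside the subtree of $v$ (with its welfare counted) or remains open, to be completed at the parent block. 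The two tables alternate: each $f(v,\tau)$ is built from the $g(B_i,\cdot)$ values of the child blocks $B_i$ of $v$, and each $g(B,\sigma)$ is built from the $f(c_j,\cdot)$ values of the child cut vertices $c_j$ of $B$ together with a chosen partition of the vertices of $B$ into coalitions based in $B$; the answer is read off at the root.

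The main obstacle is evaluating $g(B,\sigma)$ in time polynomial in $\Delta$, since a priori there are super-polynomially many partitions of $B$ into cliques and stars. I would overcome this with three structural reductions on an optimal partition restricted to $B$: (i) any two clique coalitions contained in $B$ and not involving $p_B$ can be merged to strictly increase welfare, so without loss of generality there is at most one such ``big'' clique; (ii) a star coalition fully inside $B$ and centered at a non-cut vertex is strictly dominated by the clique on the same vertex set, hence every star center in $B$ may be taken to be a cut vertex; (iii) a non-cut leaf of any star may be moved into the big clique of $B$ without decreasing welfare. Consequently, the partition of $B$ is described by a single big clique together with stars centered at some of the $O(\Delta)$ cut vertices of $B$, whose $B$-side leaves are themselves cut vertices.

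Under this restricted structure, $g(B,\sigma)$ can be computed by an inner enumeration over the size of the big clique, the role of each cut vertex of $B$ (in the big clique, star center, star leaf of a specified center, or absent), and the $B$-side leaf counts of the open star centers; combined with the $O(\Delta^2)$ choices of $\sigma$, this fits into $O(|B|^4) = O(\Delta^4)$ operations per block. Since $\sum_B |B| = O(n)$ in a block graph, summing over all blocks yields the claimed $O(n\Delta^4)$ running time.
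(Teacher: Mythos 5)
Your overall architecture—alternating DP tables on the block-cut tree, one for blocks indexed by the role of the parent cut vertex and one for cut vertices indexed by whether their coalition is closed or open—is essentially the paper's algorithm, and your reductions (i) and (ii) are consistent with what the paper's recurrences implicitly exploit. The gap is in reduction (iii). The claim that a non-cut leaf of a star can always be moved into the big clique of $B$ without decreasing welfare presupposes that such a clique exists; when it does not, the moved vertex becomes a singleton, the star loses $2/((\ell+1)(\ell+2))$ welfare, and the total strictly decreases. A concrete counterexample is the path $a\text{--}c\text{--}b$ (a block graph with blocks $\{a,c\}$ and $\{c,b\}$ and cut vertex $c$): the unique optimum is the single star $\{a,c,b\}$ with welfare $4/3$, in which the $B$-side leaf $a$ of the star centered at $c$ is a non-cut vertex of its block and there is no other coalition in that block to absorb it. Your restricted search space (``stars \dots whose $B$-side leaves are themselves cut vertices'') cannot represent this partition and would return $1$ instead of $4/3$, so the algorithm as described is incorrect.

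The correct version of your observation is weaker and is exactly the paper's Lemma~\ref{scu:num}: if \emph{two or more} stars each take a leaf inside $B$, pulling all those leaves out into one clique gains $t-1$ while losing at most $t/3$, so some optimum has \emph{at most one} star reaching into $B$ for a leaf—but that one star must be kept as a possibility, and its leaf may be a non-cut vertex of $B$ (or the parent cut vertex). The paper handles this by introducing an extra state ($\stcu$) for a cut vertex that is an ``open'' star center still owed one leaf from its parent block, and by enumerating the at most $\Delta^{2}$ choices of the pair (star center, leaf in $B$) inside each block, which is where the $\Delta^{4}$ in the running time comes from. To repair your proof you would need to add this one-exceptional-star case to the inner enumeration of $g(B,\sigma)$ (and, relatedly, make explicit that the ``role of each cut vertex'' is assigned by a secondary DP scanning the children one at a time, not by brute-force enumeration of all role vectors, which would be exponential in $|B|$).
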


To design a DP-based algorithm, we first define the block-cut tree of a graph.


\subsection{Block-cut tree}\label{subsec:block-cut}
For a block graph $G=(V,E)$, let $\mathcal{B}=\{B_1,\ldots,B_\beta\}$ be the set of maximal cliques, called \emph{blocks}, in $G$ and $\cC=\{c_1,\ldots,c_{\gamma}\}$ be the set of cut vertices in $G$. Then the \emph{block-cut tree} $\mathcal{T}(G) = (\mathcal{X}, \mathcal{E})$ of $G$ is a tree such that $\mathcal{X}=\mathcal{B}\cup\mathcal{C}$ and each edge in $\mathcal{E}$ connects a block $B\in \mathcal{B}$ and a cut vertex $c\in \mathcal{C}\cap B$.
For simplicity, we sometimes write $\mathcal{T}$ instead of $\mathcal{T}(G)$. 
We call $B\in \mathcal{B}$ a block node and $c\in \mathcal{C}$ a cut node of $\mathcal{T}$. 
For convenience, we consider a block-cut tree rooted by a block node $B_r$. We denote by $\mathcal{T}_{x}(G)$ a subtree consisting of a node $x\in \cX$ and its descendants on $\cT$, by $p_{\cT}(x)$ the parent node of $x\in \cX\setminus \{B_r\}$, and by $\textrm{Child}_{\cT}(x)$ the set of children nodes of $x\in \cX$ in $\cT$ (see Figures~\ref{fig:tree:decomposition1} and \ref{fig:tree:decomposition2}). We also denote by $V_x$ the set of vertices in $G$ corresponding to $\cT_x$ and by $G[V_x]$ the induced subgraph of $G$ corresponding to $\cT_x$. For each block node $B\in \mathcal{B}\setminus \{B_r\}$ and its parent node $p_{\cT}(B)\in \mathcal{C}$, $c_p$ denotes the cut vertex in $B\cap p_{\cT}(B)$.
For $B\in \mathcal{B}$, we define $R(B):=B\setminus\cC$ as the set of non-cut vertices in $B$.
For a rooted block-cut tree, its leaves are block nodes.
\begin{figure}[htb]
\begin{tabular}{cc}
\begin{minipage}{0.45\hsize}
\centering
\includegraphics[width=\textwidth]{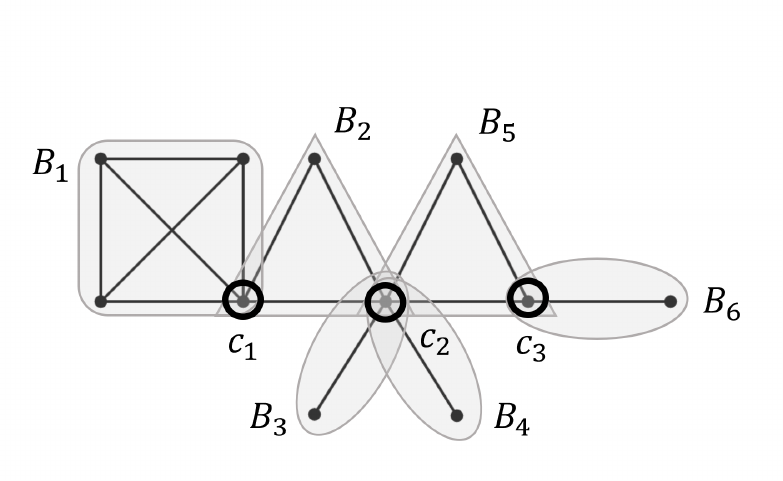}
\caption{Example of a block graph.} \label{fig:tree:decomposition1}
\end{minipage}
\begin{minipage}{0.45\hsize}
\centering
\includegraphics[width=\textwidth]{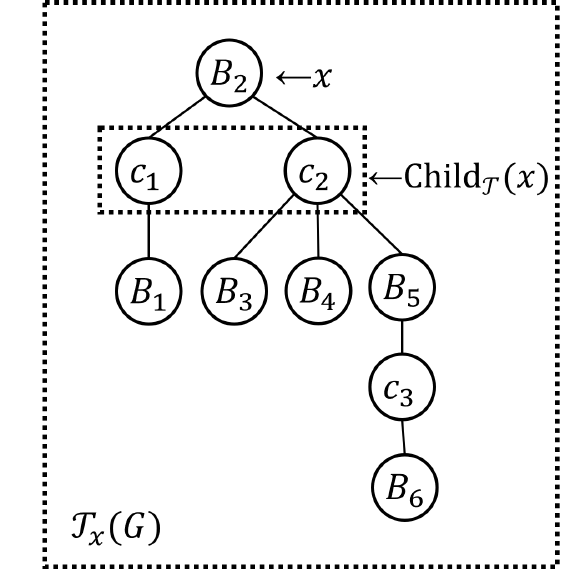}
\caption{The corresponding tree structure of block graph in Figure \ref{fig:tree:decomposition1}.}\label{fig:tree:decomposition2}
\end{minipage}
\end{tabular}
\end{figure}

\subsection{Recurrence relations: overviews}\label{subsec:DP:overviews}
By Theorem~\ref{thm:block:opt}, we design a dynamic programming algorithm on block-cut trees to find an optimal coalition structure such that each coalition is either a clique or a star. 

For a block node $B\in \mathcal{B}$, we define $\texttt{T}_B=\{\iso,\cl,\stc,\stl\}$ as the set of states of the cut vertex $c_p\in B\cap p_{\cT}(B)$ that represents the role of $c_p$ within the coalition containing $c_p$ in the intermediate steps of the algorithm for a block node. For a coalition structure $\PP_{B}$ of $G[V_B]$, $\iso$ means that $c_p$ is a singleton in $\PP_B$, $\cl$  means that $c_p$ belongs to a coalition that forms a clique of size at least 2 in $\PP_B$, $\stc$ means that  $c_p$ belongs to a coalition that forms a star (not a singleton) and $c_p$ is its center in $\PP_B$, and $\stl$ means that $c_p$ belongs to a coalition that forms a star (not a singleton) and $c_p$ is its leaf in $\PP_B$. Figure~\ref{fig:T_B} shows the role of $c_p\in B\cap p_{\cT}(B)$ with respect to $\texttt{T}_B$.
\begin{figure}[htb]
\centering
\includegraphics[width=\textwidth]{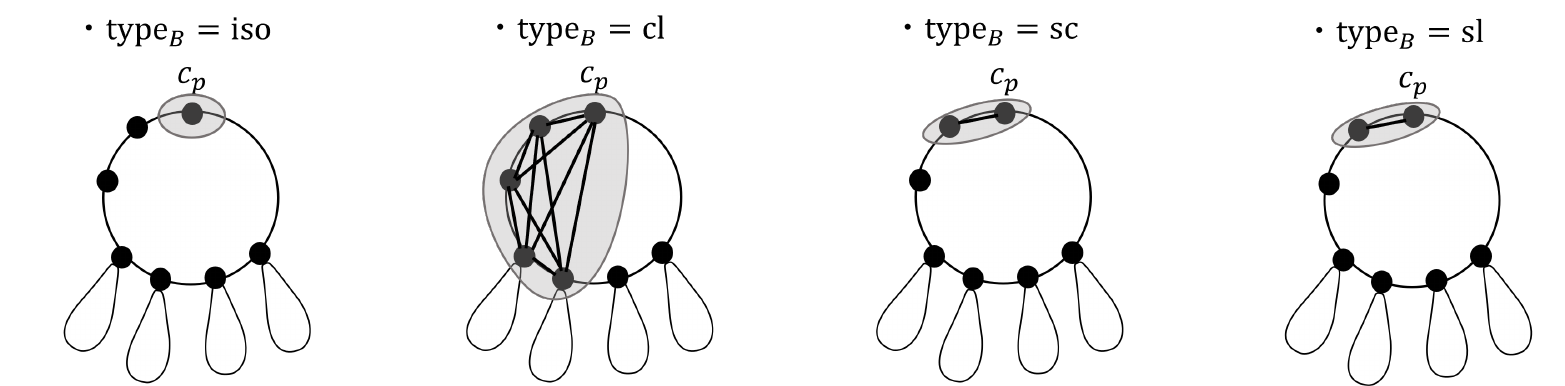}
\caption{The role of $c_p\in B\cap p_{\cT}(B)$ in $B$ with respect to $\texttt{T}_B$.} \label{fig:T_B}
\end{figure}

We also define $\texttt{T}_c=\{\iso,\cl,\stc_1,\ldots, \stc_{|\textrm{Child}_{\cT}(c)|},\stcu, \stl\}$ for a cut node $c\in \mathcal{C}$ as  the set of states of cut vertex $c$  that represents the role of  $c$ in the intermediate step of the algorithm.
Intuitively, in a cut node, the results of its children's block nodes  are integrated.
The states $\iso$, $\cl$, $\stl$ of cut vertex $c$ are the same meaning as the ones in $\texttt{T}_B$. The states $\stc_\ell$ denote that the cut vertex $c$ is the center vertex in the coalition that forms a star with $\ell$ leaves in $G[V_c]$. The state $\stcu$ denotes that the cut vertex $c$ is the center vertex in the coalition that forms a star with leaves in $G[V_c]$ and the coalition to which $c\in \mathcal{C}$ belongs \emph{will} contain one vertex (denote it $v_{\textrm{leaf}}$) in the parent node $p_{\mathcal{T}}(c)\in \mathcal{B}$ as a leaf.
Note that since the coalition forms a star and $p_{\mathcal{T}}(c)$ forms a clique, it can contain at most one vertex in $p_{\mathcal{T}}(c)$. 
For state $\stcu$,  we need not preserve the number of leaves of the star of $c$ because the maximum utilitarian welfare with respect to $\stcu$ can be computed from the values with respect to $\stc_\ell$ for $1\le \ell \le |\textrm{Child}_{\cT}(c)|$.
Figure~\ref{fig:T_c} shows the role of $c\in \mathcal{C}$ with respect to types in $\texttt{T}_c$.
\begin{figure}[htb]
\centering
\includegraphics[width=\textwidth]{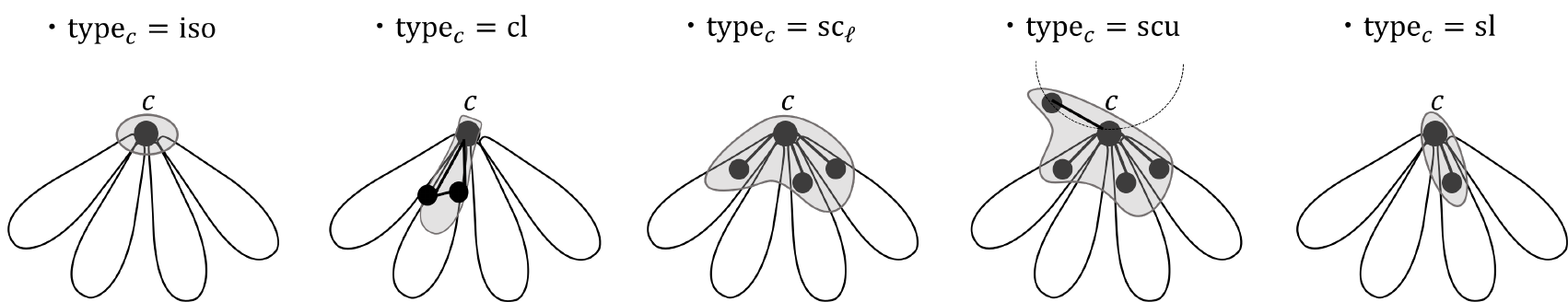}
\caption{The role of $c\in \mathcal{C}$ in its cut node with respect to $\texttt{T}_c$.} \label{fig:T_c}
\end{figure}

Now, we define the recurrence relations for block nodes and cut nodes. 
For block nodes, we define $f^*(B,\typeB,k)$ as the maximum utilitarian welfare among coalition structures in $G[V_{B}]$ that satisfies the following condition: 
\begin{itemize}
    \item $c_p$ is the parent cut node of $B$ and the role of the $c_p$ within the coalition to which $c_p$ belongs is $\typeB\in \{\iso, \cl, \stc,\stl\}$.
    \item if $\textrm{Child}_{\cT}(B)\neq\emptyset$, there are exactly $k$ cut vertices $c_1,\ldots c_k\in \textrm{Child}_{\cT}(B)$ such that the coalition that $c_i$ belongs to contain no vertex in $V_{c_i}\setminus \{c_{i}\}$ for $1\le i\le k$.
\end{itemize}


For a cut node $c\in C$ and $\typec\in\{\iso,\cl,\stc_1,\ldots,\stc_{|\textrm{Child}_{\cT}(c)|},\stl\}$, let $g^*(c,\typec)$ be the maximum utilitarian welfare in the coalition structures in $G[V_c]$ in which the role of $c$ is $\typec$.

Also, $g^*(c,\stcu)$ denotes the maximum utilitarian welfare in  $G[V_{c}\cup\{v_{\textrm{leaf}}\}]$ when the cut node $c\in C$ becomes the center of a star in its coalition and $v_{\textrm{leaf}}\in p_{\cT}(c)$ becomes a leaf of $c$'s star coalition.
Then we can compute $g^*(c,\stcu)$ from $g^*(c,\stc_\ell)$ for $1\le \ell \le |\textrm{Child}_{\cT}(c)|$ as follows:
\begin{align*}
g^*(c,\stcu)=\max_\ell\left\{g^*(c,\stc_\ell)+\frac{2}{(\ell+1)(\ell+2)}\right\}.
\end{align*}
Since the utilitarian welfare of a star with $\ell$ leaves and a star with $\ell+1$ leaves are $2\ell/(\ell+1)$ and $2(\ell+1)/(\ell+2)$, respectively, the increase of the utilitarian welfare when a leaf is added to the coalition that forms a star with $\ell$ leaves is $2(\ell+1)/(\ell+2)-2\ell/(\ell+1)=2/(\ell+1)(\ell+2)$. Thus, the equation holds.  

Let $g^*(c,\fin)=\max_{\typec\in\{\cl,\stc_1,\ldots,\stc_|\textrm{Child}_{\cT}(c)|,\stl\}}\{g^*(c,\typec)\}$ be the maximum utilitarian welfare in $G[V_c]$. If there is no coalition structure that satisfies the conditions, we set $f^*(B,\typeB,k)=-\infty$ and $g^*(c,\typec)=-\infty$ as invalid cases. 
In the following, we define the recurrence relations of the dynamic programming algorithm to compute $f^*(B,\typeB,k)$ and $g^*(c,\typec)$ recursively.

\subsection{Recurrence relations in the block nodes}
First, we prove that for such an optimal coalition structure, the following lemma holds. 
In $\mathcal{T}$, let $B\in \mathcal{B}$ be the parent node of the cut node $c\in \mathcal{C}$. Then, the following lemma holds.

\begin{lemma}\label{scu:num}
Let $B\in \mathcal{B}$ be a block node in a rooted block-cut tree.
Then there exists an optimal coalition structure such that at most one cut vertex $c$ in $\textrm{Child}_{\cT}(B)$ that is the center of a star having both vertices in $V_B\setminus B$ and in $B$ as leaves. 
\end{lemma}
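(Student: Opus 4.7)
The plan is to prove the lemma by a local exchange argument: if two or more children cut vertices $c_1, c_2 \in \textrm{Child}_{\cT}(B)$ were simultaneously centers of such ``mixed'' star coalitions, the coalition structure could be modified to strictly increase the utilitarian welfare, contradicting optimality. By Theorem~\ref{thm:block:opt}, I may restrict attention to optima in which every coalition is a clique or a star.

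First I would pin down the structure of the two offending coalitions. Let $S_1, S_2$ be the star coalitions centered at $c_1, c_2$, each containing at least one leaf from $V_{c_i}\setminus\{c_i\}\subseteq V_B\setminus B$ and at least one leaf from $B$. Because $B$ is a clique, any two distinct leaves of $S_i$ drawn from $B\setminus\{c_i\}$ would be mutually adjacent, breaking the star structure; hence $S_i\cap B=\{c_i,v_i\}$ for exactly one $v_i\in B\setminus\{c_i\}$. Disjointness of $S_1$ and $S_2$, together with $c_j\in S_j$, forces $v_1\ne v_2$, $v_1\ne c_2$, and $v_2\ne c_1$.

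Next I would perform the swap: remove $v_i$ from $S_i$ for $i=1,2$ and form a new coalition $\{v_1,v_2\}$, which is a valid $2$-clique since both vertices lie in the clique $B$. Writing $\ell_i\ge 1$ for the number of leaves of $S_i$ taken from $V_{c_i}$, Property~\ref{property:star} yields
\[
\uw(S_i)-\uw(S_i\setminus\{v_i\}) \;=\; \frac{2(\ell_i+1)}{\ell_i+2}-\frac{2\ell_i}{\ell_i+1} \;=\; \frac{2}{(\ell_i+1)(\ell_i+2)} \;\le\; \tfrac{1}{3},
\]
while $\uw(\{v_1,v_2\})=1$ by Property~\ref{property:clique}. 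Hence the total change in utilitarian welfare is at least $1-\tfrac{1}{3}-\tfrac{1}{3}=\tfrac{1}{3}>0$, a strict improvement.

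The main obstacle, which is really a bookkeeping check, is confirming that no \emph{new} bad children cut vertex of $B$ is created. The only coalitions altered are $S_1, S_2$, and the new pair $\{v_1,v_2\}$. Each shrunken $S_i$ remains a star centered at $c_i$ with all leaves in $V_{c_i}$, so $c_i$ is no longer bad. If some $v_i$ happens to coincide with a third child cut vertex $c_3\in\textrm{Child}_{\cT}(B)$, then $c_3$ now lies in the $2$-clique $\{v_1,v_2\}$ and in particular is not the center of a star whose leaves span both $V_B\setminus B$ and $B$. Thus the number of bad children cut vertices strictly decreases under the exchange, and iterating yields an optimum with at most one such cut vertex.
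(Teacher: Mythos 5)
Your proposal is correct and follows essentially the same exchange argument as the paper: detach the unique $B$-leaf from each offending star (whose uniqueness you justify the same way, via $B$ being a clique) and regroup those leaves into a clique, with the welfare change bounded using $\frac{2}{(\ell_i+1)(\ell_i+2)}\le\frac13$ versus the clique's gain. The only cosmetic difference is that the paper merges all $t\ge 2$ offending leaves into one $t$-clique at once, while you pair two at a time and iterate; both give a strict improvement contradicting optimality.
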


\begin{proof}
For the sake of contradiction, suppose that there are $t(\ge 2)$ vertices $c_1,\ldots, c_t\in \textrm{Child}_{\cT}(B)$ that are the centers of $t$ stars having both vertices in $V_{\textrm{Child}_{\cT}(B)}$ and in $B$ as leaves in an optimal coalition structure $\PP^*$. Let $S_1,\ldots,S_t$ be $t$ stars whose centers are $c_1,\ldots, c_t$ 
and let $\ell_1+1,\ldots,\ell_t+1$ be the number of leaves of $S_1,\ldots,S_t$. 
Since $S_i$ is a star and $B$ forms a clique, each $S_i$ contains exactly one vertex in $B$ as a leaf.
We denote by $u_i$ the unique leaf in $B$ of $S_i$.

Now we reconstruct a coalition structure $\PP$ from $\PP^*$. 
Let $S'_i=S_i\setminus \{u_i\}$ for $1\le i\le t$ and $C_B = \{u_1,\ldots, u_t\}$.
Then $S'_i$ is a star with $\ell_i$ leaves and $C_B$ is a clique of size $t$.
Let $\PP = \PP^*\setminus \{S_1,\ldots,S_t\} \cup  \{S'_1,\ldots,S'_t,C_B\}$.

Considering $\ell_1,\ldots,\ell_t\ge 1$, the difference between $\PP$ and $\PP^*$ is as follows:
\begin{align*}
\uw(\PP^*) -\uw(\PP)
 &=\uw(S_1)+\cdots+\uw(S_t)-(\uw(S'_1)+\cdots+\uw(S'_t)+t-1)\\
 &=\frac{2}{(\ell_1+1)(\ell_1+2)}+\cdots+\frac{2}{(\ell_t+1)(\ell_t+2)}-t+1\\
&\le  -\frac{2}{3}t+1<0.
\end{align*}
This contradicts the optimality of $\PP^*$. 
\end{proof}

From Lemma~\ref{scu:num}, there is at most one cut vertex in $\textrm{Child}_{\cT}(B)$ whose role is $\stcu$ in the coalition structure that realizes $f^*(B,\typeB,k)$. Therefore, we consider two cases (1) no cut vertex in $\textrm{Child}_{\cT}(B)$ takes the role $\stcu$ and (2) exactly one cut vertex in $\textrm{Child}_{\cT}(B)$ takes the role $\stcu$. 

\subsubsection{Case (1): No cut vertex in $\textrm{Child}_{\cT}(B)$ takes the role $\stcu$.}\

First, we consider Case (1). 
For a block node $B$ and $c_1, c_2, \ldots, c_i \in \textrm{Child}_{\cT}(B)$ where $1\le i\le |\textrm{Child}_{\cT}(B)|$, we define  $f(B,\typeB,i,k)$ as the maximum utilitarian welfare  in $G[\{c_p\}\cup R(B)\cup \bigcup_{j=1}^{i} V_{c_j}]$ that satisfies the following conditions: 
\begin{itemize}
 \item $c_p$ is the parent cut node of $B$ and the role of the $c_p$ within the coalition to which $c_p$ belongs is $\typeB\in \{\iso, \cl, \stc,\stl\}$.
 \item There are exactly $k$ cut vertices among $c_1, c_2, \ldots, c_i$ such that each of $k$ such cut vertices does not form the coalition with vertices in its subtree, and each of exactly $i-k$ other cut vertices among $c_1, c_2, \ldots, c_i$ forms a coalition with only vertices in its subtree. 
\end{itemize}
Note that $0\le k\le i \le |\textrm{Child}_{\cT}(B)|$. Then, the maximum utilitarian welfare in $G[V_B]$ in Case (1) can be expressed by $f(B,\typeB,|\textrm{Child}_{\cT}(B)|,k)$. 
We can compute $f(B,\typeB,|\textrm{Child}_{\cT}(B)|,k)$ by the following argument.

As the base case, we consider the maximum utilitarian welfare in  $G[\{c_p\}\cup R(B)]$. For convenience, we set this case as $i=0$. Considering $G[\{c_p\}\cup R(B)]$ is a clique, we have 
\begin{align*}
 f(B,\typeB,0,0) &= \begin{cases}
0 & \text{if $R(B)=\emptyset$}\\ 
|R(B)|-1 & \text{if $R(B)\neq\emptyset$ \& $\typeB= \iso$}\\ 
|R(B)| & \text{if $R(B)\neq\emptyset$ \& $\typeB= \cl$}\\
\max\{1,|R(B)|-1\} & \text{otherwise}. 
\end{cases}
\end{align*}
The above equation holds by the following observation: 
In the first case, $G[\{c_p\}\cup R(B)]$ consists of only one vertex, and hence $f=0$. 
If $R(B)\neq \emptyset$,  $R(B)$ is contained in some coalition in any optimal coalition structure in $G[\{c_p\}\cup R(B)]$.
In the second case, $c_p$ is a singleton. Thus, the maximum utilitarian welfare, in this case, is  $|R(B)|-1$. Note that $R(B)$ forms a clique. In the third case, the optimal coalition structure with the maximum utilitarian welfare is clearly $\{c_p\}\cup R(B)$ and it forms a clique. Thus, the maximum utilitarian welfare is $|R(B)|$. 
In the remaining case, $c_p$ takes the state $\stc$ or $\stl$, which means that $c_p$'s coalition forms a star. In this case, if $|R(B)|=1$,  an optimal coalition structure forms a star of size 2. Thus, the utilitarian welfare is $1$. 
On the other hand, if $|R(B)|>1$, an optimal coalition structure forms a star of size 2 which contains $c_p$ and a clique of size $|R(B)|-1$. Thus, the utilitarian welfare is  $|R(B)|-1$.

Next, we consider the case when $i> 0$. 
\paragraph{Case $\typeB=\iso$:}\ 

We define the recurrence relation for $\typeB=\iso$ as follows.
\begin{align*}
&f(B,\iso,i,k)\\
&=
\begin{cases}
f(B,\iso,i-1,0)+g^*(c_i,\fin)
& \text{if $k = 0$}\\
\max\left\{
\begin{array}{c}
f(B,\iso,i-1,0)+g^*(c_i,\iso),\\
f(B,\iso,i-1,1)+g^*(c_i,\fin)
\end{array}
\right\}
& \text{if $k=1$ \& $R(B)= \emptyset$}\\
\max\left\{
\begin{array}{c}
f(B,\iso,i-1,k-1)+g^*(c_i,\iso)+1,\\
f(B,\iso,i-1,k)+g^*(c_i,\fin)
\end{array}
\right\}
& \text{otherwise}.
\end{cases}
\end{align*}
The above equation holds by the following observation: 
Consider the difference between the coalition structure in $G[\{c_p\}\cup R(B)\bigcup_{j=1}^{i} V_{c_j}]$ and 
the one in $G[\{c_p\}\cup R(B)\bigcup_{j=1}^{i-1} V_{c_j}]$. 
First, when $k=0$, the cut node $c_i$ in $G[V_{c_i}]$ is not in the same coalition with any vertex in $B$. 

When $k\ge 1$, the cut node $c_i$ in $G[V_{c_i}]$ belongs to either a coalition containing a vertex in $B$ ($g^*(c_i,\iso)$) or not ($g^*(c_i,\fin)$). Thus, we take the maximum one.
Suppose that $k=1$ and $R(B)=\emptyset$. We have two choices: $c_i$ is a singleton or $c_i$ belongs to a coalition containing only vertices in $V_{c_i}$.
On the other hand, consider the case when $k\ge 2$ or $R(B)\neq \emptyset$. Then if $c_i$ belongs to a coalition with vertices in $B$, its coalition forms a clique of size at least $2$. Thus the utilitarian welfare will increase by $1$. 

\paragraph{Case $\typeB=\cl$}\

We define the recurrence relation for $\typeB=\cl$ as follows.
\begin{align*}
f(B,\cl,i,k)=&
\begin{cases}
f(B,\cl,i-1,0)+g^*(c_i,\fin)
& \text{if $k = 0$}\\
\max\left\{
\begin{array}{c}
f(B,\cl,i-1,k-1)+g^*(c_i,\iso)+1,\\
f(B,\cl,i-1,k)+g^*(c_i,\fin)
\end{array}
\right\}
& \text{otherwise}.
\end{cases}
\end{align*}
In a valid case, we define $f(B,\cl,|\textrm{Child}_{\cT}(B)|,0)=-\infty$ if  $R(B)=\emptyset$ since $c_p$ cannot be a clique of size at least 2. 

The above equation holds by the following observation: 
We consider the difference between the coalition structures in $G[\{c_p\}\cup R(B)\cup \bigcup_{j=1}^{i} V_{c_j}]$ and $G[\{c_p\}\cup R(B)\cup \bigcup_{j=1}^{i-1} V_{c_j}]$. 
When $k=0$, the cut node $c_i$ in $G[V_{c_i}]$ is not in the same coalition with any vertex in $B$.

When $k\ge 1$, the cut node $c_i$ in $G[V_{c_i}]$ belongs to either a coalition containing a vertex in $B$ ($g^*(c_i,\iso)$) or not ($g^*(c_i,\fin)$). In the former case, $c_i$ joins a coalition that forms a clique in $B$. Thus the utilitarian welfare will increase by $1$. 
The latter case is that $c_i$ belongs to a coalition containing only vertices in $V_{c_i}$.
Thus, the utilitarian welfare can be computed by adding the coalition structure in $G[V_{c_i}]$.


\paragraph{Case $\typeB=\stc$}\

We define the recurrence relation for $\typeB=\stc$ as follows.



\begin{flalign}
&f(B,\stc,i,k)&\nonumber
\end{flalign}
\begin{empheq}
[left = {
= \empheqlbrace \,}]{alignat=2}
& f(B,\stc,i-1,0)+g^*(c_i,\fin)  &\quad &    \text{if} \ k=0 \nonumber\\ 
& \max\left\{
\begin{array}{c}
f(B,\stc,i-1,k-1)+g^*(c_i,\iso),\\
f(B,\stc,i-1,k)+g^*(c_i,\fin)
\end{array}
\right\}    &     &    
\left\{\begin{aligned}
        & \text{if}\  k=1 \& |R(B)|=1,  \\
        &  \text{if}\  k=2 \& R(B)=\emptyset  \nonumber
\end{aligned} \right.\\
& \max\left\{
\begin{array}{c}
f(B,\stc,i-1,k-1)+g^*(c_i,\iso)+1,\\
f(B,\stc,i-1,k)+g^*(c_i,\fin)
\end{array}
\right\}&\quad &    \text{otherwise.} \nonumber
\end{empheq}



As an invalid case, we define $f(B,\stc,|\textrm{Child}_{\cT}(B)|,0)=-\infty$ if $R(B)=\emptyset$, since $c_p$ cannot be the center of a star.

The above equation holds by the following observation: 
We consider the difference between the coalition structures in $G[\{c_p\}\cup R(B)\cup \bigcup_{j=1}^{i} V_{c_j}]$ and $G[\{c_p\}\cup R(B)\cup \bigcup_{j=1}^{i-1}V_{c_j}]$. 
When $k=0$, the cut node $c_i$ in $G[V_{c_i}]$ is not in the same coalition with any vertex in $B$.

When $k\ge 1$, $c_i$ is either in the same coalition with vertices in $B$ ($g^*(c_i,\iso)$) or not ($g^*(c_i,\fin)$). 
Consider the case where $c_i$ joins the same coalition with vertices in $B$ when $k+|R(B)|\neq2$. 
In this case, $c_p$ forms a clique of size $2$ with some cut vertex in $B$ or a vertex in $R(B)$ and other vertices in $B$ form a clique in an optimal coalition structure. Thus, the utilitarian welfare increases by $g^*(c_i,\iso)+1$ if $c_i$ joins a coalition in $B$.

If $k=1$ and $|R(B)|=1$, $c_p$ forms a star of size 2 with the vertex in $R(B)$ and the cut vertex in $B$ forms a singleton in an optimal coalition structure.
By the definition of $f$ in the base case, the utilitarian welfare of the star is already accounted for. 
Moreover, if $k=2$ and $R(B)=\emptyset$, $c_p$ forms a star of size 2 with a cut vertex in $B$ and the other cut vertex in $B$ forms a singleton. 



\paragraph{Case $\typeB=\stl$}\

We define the recurrence relation for $\typeB=\stl$ as follows.


\begin{flalign}
&f(B,\stl,i,k)&\nonumber
\end{flalign}
\begin{empheq}
[left = {
= \empheqlbrace \,}]{alignat=2}
& f(B,\stl,i-1,0)+g^*(c_i,\fin)  &\quad &    \text{if} \ k=0 \nonumber\\ 
& \max\left\{
\begin{array}{c}
f(B,\stl,i-1,k-1)+g^*(c_i,\iso),\\
f(B,\stl,i-1,k)+g^*(c_i,\fin)
\end{array}
\right\}    &     &    
\left\{\begin{aligned}
        & \text{if}\  k=1 \& |R(B)|=1,  \\
        &  \text{if}\  k=2 \& R(B)=\emptyset  \nonumber
\end{aligned} \right.\\
& \max\left\{
\begin{array}{c}
f(B,\stl,i-1,k-1)+g^*(c_i,\iso)+1,\\
f(B,\stl,i-1,k)+g^*(c_i,\fin)
\end{array}
\right\}&\quad &    \text{otherwise.} \nonumber
\end{empheq}

As an invalid case, we deifne $f(B,\stl,|\textrm{Child}_{\cT}(B)|,0)=-\infty$ if $R(B)=\emptyset$, since $c_p$ cannot be a leaf of a star. 

The above equation holds by the following observation: 
When $k=0$, the cut node $c_i$ in $G[V_{c_i}]$ is not in the same coalition with any vertex in $B$.

When $k\ge 1$, in  a coalition structure realizing $f(B,\stl,i,k)$, $c_p$ forms a star of size 2 with a vertex in $\textrm{Child}_{\cT}(B)$ whose role is $\iso$ or a vertex in $R(B)$ that becomes the center of the star. 
By similar reasons to the case of $\typeB=\stc$, the correctness of the recurrence relations follows.




\subsubsection{Case (2): Exactly one cut vertex in $\textrm{Child}_{\cT}(B)$ takes the role $\stcu$.}\

Next, we consider Case (2). Fix a cut vertex $c_{\textrm{center}}$ in $\textrm{Child}_{\cT}(B)$ whose role is $\stcu$, that is, the center of a star having the unique leaf $v_{\textrm{leaf}}$ in $B$. 
The number of  choices of such pairs is at most $\Delta^2$.

Then  we define $f_{c_{\textrm{center}}}^{v_{\textrm{leaf}}}(B,\typeB,i,k)$ as the maximum utilitarian welfare  in $G[\{c_p\}\cup R(B)\cup \bigcup_{j=1}^{i} V_{c_j}]$ that satisfies the following conditions:
\begin{itemize}
    \item $c_p$ is the parent cut node of $B$ and the role of $c_p$ is $\typeB\in \{\iso, \cl, \stc,\stl\}$.
    \item The role of $c_{\textrm{center}}$ is $\stcu$ and $v_{\textrm{leaf}}$ belongs to $c_{\textrm{center}}$'s coalition  as a leaf. 

\item The subgraph has $i$ cut vertices $c_1,c_2,\ldots,c_i$ in $\textrm{Child}_{\cT}(B)$. One is $c_{\textrm{center}}$. The coalitions that the $i-1$ remaining cut vertices belong to form as follows: there exist exactly $k$ cut vertices, which may include $v_{\textrm{leaf}}$ if $v_{\textrm{leaf}}\in \textrm{Child}_{\cT}(B)$, such that each of them does not form the coalition with vertices in its subtree. In other words, the other $i-k-1$ cut vertices form their coalitions with vertices in their subtrees.
\end{itemize}

Without loss of generality, we suppose $c_1 = c_{\textrm{center}}$ and $c_2 = v_{\textrm{leaf}}$ if $v_{\textrm{leaf}}\in \textrm{Child}_{\cT}(B)$.
As the base case, we consider the maximum utilitarian welfare in $G[\{c_p\}\cup R(B)\cup V_{c_1}]$ if $v_{\textrm{leaf}}\in R(B)\cup\{c_p\}$   and $G[\{c_p\}\cup R(B)\cup \bigcup_{j=1}^{2} V_{c_j}]$ if $v_{\textrm{leaf}}\in \textrm{Child}_{\cT}(B)\setminus\{c_1\}$. 
For convenience, we set these cases as $i=1,2$ respectively. 
We first consider the case when $v_{\textrm{leaf}}\in R(B)\cup\{c_p\}$. 
By the assumption of $c_1 = c_{\textrm{center}}$, we set $f_{c_{\textrm{center}}}^{v_{\textrm{leaf}}}(B,\typeB,1,1)=-\infty$. Also the case of $v_{\textrm{leaf}}=c_p$ and $\typeB\in\{\iso,\cl,\stc\}$ contradicts the definition, so we set $f_{c_{\textrm{center}}}^{v_{\textrm{leaf}}}(B,\typeB,1,1)=-\infty$ for this case. In the remaining cases, we have  

\begin{flalign}
&f_{c_{\textrm{center}}}^{v_{\textrm{leaf}}}(B,\typeB,1,0)&\nonumber
\end{flalign}
\begin{empheq}
[left = {
= \empheqlbrace \,}]{alignat=2}
& g^*(c_1,\stcu)     &     &    
\left\{\begin{aligned}
        & \text{if}\  v_{\textrm{leaf}}\in R(B)  \&\  |R(B)|=1 \\
        & \ \&\  \typeB\in\{\iso,\stc,\stl\},   \\
        & \text{if}\  v_{\textrm{leaf}}=c_p \ \& \ R(B)=\emptyset \\
        &  \ \&\  \typeB=\stl \nonumber
\end{aligned} \right.\\
& g^*(c_1,\stcu)+|R(B)|-2     &     &    
\left\{\begin{aligned}
        & \text{if}\  v_{\textrm{leaf}}\in R(B)\  \& \ |R(B)|>1  \\
        & \ \&\  \typeB= \iso  \nonumber
\end{aligned} \right.\\
& g^*(c_1,\stcu)+|R(B)|-1     &     &    
\left\{\begin{aligned}
        & \text{if}\  v_{\textrm{leaf}}\in R(B)\  \& \ \typeB= \cl, \\
        & \text{if}\  v_{\textrm{leaf}}=c_p \  \&\  R(B)\neq\emptyset \\
        & \ \&\  \typeB=\stl  \nonumber
\end{aligned} \right.\\
& g^*(c_1,\stcu)+\max\{1,|R(B)|-2\}    &\quad &    \text{otherwise.} \nonumber
\end{empheq}
If $v_{\textrm{leaf}}\in \textrm{Child}_{\cT}(B)\setminus\{c_1\}$ (i.e., $i=2$), $k$ can be $0,1,2$, but actually $k$ should be $1$ by definition. 
We thus set $f_{c_{\textrm{center}}}^{v_{\textrm{leaf}}}(B,\typeB,2,k)=-\infty$ for $k=0,2$. For the remaining case, 
\begin{flalign}
&f_{c_{\textrm{center}}}^{v_{\textrm{leaf}}}(B,\typeB,2,1)&\nonumber
\end{flalign}
\begin{empheq}
[left = {
= \empheqlbrace \,}]{alignat=2}
& g^*(c_1,\stcu) +g^*(c_2,\iso)+\max\{0,|R(B)|-1\}    &     & \text{if}\  \typeB=\iso \nonumber \\
& g^*(c_1,\stcu) +g^*(c_2,\iso)+|R(B)|     &     & \text{if}\    \typeB=\cl \nonumber\\
& g^*(c_1,\stcu) +g^*(c_2,\iso)     &     & 
 \left\{\begin{aligned}
  & \text{if}\ \typeB\in\{\stc,\stl\} \\
  &\ \& \ R(B)=\emptyset \nonumber
\end{aligned} \right.\\
& g^*(c_1,\stcu) +g^*(c_2,\iso)+\max\{1,|R(B)|-1\}    &\quad &    \text{otherwise.} \nonumber
\end{empheq}

Function $f_{c_{\textrm{center}}}^{v_{\textrm{leaf}}}(B,\typeB,i,k)$ can be calculated recursively in the same way as $f(B,\typeB,i,k)$ by using $g^*$. 
Indeed, as with the computation of $f(B,\typeB,i,k)$, the recurrence relations computing $f_{c_{\textrm{center}}}^{v_{\textrm{leaf}}}(B,\typeB,i,k)$ can be defined with respect to $\typeB$ with slight modification. If $\typeB\in\{\iso,\cl,\stc\}$, we define $f_{c_{\textrm{center}}}^{v_{\textrm{leaf}}}(B,\typeB,i,k)=-\infty$ if $v_{\textrm{leaf}}=c_p$ as invalid cases.

First, we consider the case when $i\ge 2$ and $v_{\textrm{leaf}}\in R(B)\cup\{c_p\}$. 
\paragraph{Case $\typeB=\iso\  \&\  v_{\textrm{leaf}}\in R(B)$:}\ 

\begin{align*}
&f_{c_{\textrm{center}}}^{v_{\textrm{leaf}}}(B,\iso,i,k)\\
&=
\begin{cases}
f_{c_{\textrm{center}}}^{v_{\textrm{leaf}}}(B,\iso,i-1,0)+g^*(c_i,\fin)
& \text{if $k=0$}\\
\max\left\{
\begin{array}{c}
f_{c_{\textrm{center}}}^{v_{\textrm{leaf}}}(B,\iso,i-1,0)+g^*(c_i,\iso),\\
f_{c_{\textrm{center}}}^{v_{\textrm{leaf}}}(B,\iso,i-1,1)+g^*(c_i,\fin)
\end{array}
\right\}
& \text{if $k=1$ \& $|R(B)|=1$}\\
\max\left\{
\begin{array}{c}
f_{c_{\textrm{center}}}^{v_{\textrm{leaf}}}(B,\iso,i-1,k-1)+g^*(c_i,\iso)+1,\\
f_{c_{\textrm{center}}}^{v_{\textrm{leaf}}}(B,\iso,i-1,k)+g^*(c_i,\fin)
\end{array}
\right\}
& \text{otherwise}.
\end{cases}
\end{align*}

\paragraph{Case $\typeB=\cl\  \&\  v_{\textrm{leaf}}\in R(B)$:}\ 

\begin{align*}
&f_{c_{\textrm{center}}}^{v_{\textrm{leaf}}}(B,\cl,i,k)\\
&=
\begin{cases}
f_{c_{\textrm{center}}}^{v_{\textrm{leaf}}}(B,\cl,i-1,0)+g^*(c_i,\fin)
& \text{if $k=0$}\\
\max\left\{
\begin{array}{c}
f_{c_{\textrm{center}}}^{v_{\textrm{leaf}}}(B,\cl,i-1,k-1)+g^*(c_i,\iso)+1,\\
f_{c_{\textrm{center}}}^{v_{\textrm{leaf}}}(B,\cl,i-1,k)+g^*(c_i,\fin)
\end{array}
\right\}
& \text{otherwise}.
\end{cases}
\end{align*}

\paragraph{Case $\typeB=\stc\  \&\  v_{\textrm{leaf}}\in R(B)$:}\ 


\begin{flalign}
&f_{c_{\textrm{center}}}^{v_{\textrm{leaf}}}(B,\stc,i,k)&\nonumber
\end{flalign}
\begin{empheq}
[left = {
= \empheqlbrace \,}]{alignat=2}
& f_{c_{\textrm{center}}}^{v_{\textrm{leaf}}}(B,\stc,i-1,0)+g^*(c_i,\fin)    &\quad &    \text{if}\ k=0 \nonumber \\
& \max\left\{
\begin{array}{c}
f_{c_{\textrm{center}}}^{v_{\textrm{leaf}}}(B,\stc,i-1,k-1)+g^*(c_i,\iso),\\
f_{c_{\textrm{center}}}^{v_{\textrm{leaf}}}(B,\stc,i-1,k)+g^*(c_i,\fin)
\end{array}
\right\}     &     &    
\left\{\begin{aligned}
        & \text{if}\  k=1 \& |R(B)|=2,  \\
        &  \text{if}\  k=2 \& |R(B)|=1  \nonumber
\end{aligned} \right.\\
& \max\left\{
\begin{array}{c}
f_{c_{\textrm{center}}}^{v_{\textrm{leaf}}}(B,\stc,i-1,k-1)+g^*(c_i,\iso)+1,\\
f_{c_{\textrm{center}}}^{v_{\textrm{leaf}}}(B,\stc,i-1,k)+g^*(c_i,\fin)
\end{array}
\right\}   &\quad &    \text{otherwise.} \nonumber
\end{empheq}

\paragraph{Case $\typeB=\stl\  \&\  v_{\textrm{leaf}}\in R(B)$:}\ 


\begin{flalign}
&f_{c_{\textrm{center}}}^{v_{\textrm{leaf}}}(B,\stl,i,k)&\nonumber
\end{flalign}
\begin{empheq}
[left = {
= \empheqlbrace \,}]{alignat=2}
& f_{c_{\textrm{center}}}^{v_{\textrm{leaf}}}(B,\stl,i-1,0)+g^*(c_i,\fin)    &\quad &    \text{if}\ k=0 \nonumber \\
& \max\left\{
\begin{array}{c}
f_{c_{\textrm{center}}}^{v_{\textrm{leaf}}}(B,\stl,i-1,k-1)+g^*(c_i,\iso),\\
f_{c_{\textrm{center}}}^{v_{\textrm{leaf}}}(B,\stl,i-1,k)+g^*(c_i,\fin)
\end{array}
\right\}    &     &    
\left\{\begin{aligned}
        & \text{if}\  k=1 \& |R(B)|=2,  \\
        &  \text{if}\  k=2 \& |R(B)|=1  \nonumber
\end{aligned} \right.\\
& \max\left\{
\begin{array}{c}
f_{c_{\textrm{center}}}^{v_{\textrm{leaf}}}(B,\stl,i-1,k-1)+g^*(c_i,\iso)+1,\\
f_{c_{\textrm{center}}}^{v_{\textrm{leaf}}}(B,\stl,i-1,k)+g^*(c_i,\fin)
\end{array}
\right\}  &\quad &    \text{otherwise.} \nonumber
\end{empheq}

\paragraph{Case $\typeB=\stl\  \&\  v_{\textrm{leaf}}=c_p$:}\ 

\begin{align*}
&f_{c_{\textrm{center}}}^{v_{\textrm{leaf}}}(B,\stl,i,k)\\
&=
\begin{cases}
f_{c_{\textrm{center}}}^{v_{\textrm{leaf}}}(B,\stl,i-1,0)+g^*(c_i,\fin)
& \text{if $k=0$}\\
\max\left\{
\begin{array}{c}
f_{c_{\textrm{center}}}^{v_{\textrm{leaf}}}(B,\stl,i-1,0)+g^*(c_i,\iso),\\
f_{c_{\textrm{center}}}^{v_{\textrm{leaf}}}(B,\stl,i-1,1)+g^*(c_i,\fin)
\end{array}
\right\}
& \text{if $k=1$ \& $R(B)=\emptyset$}\\
\max\left\{
\begin{array}{c}
f_{c_{\textrm{center}}}^{v_{\textrm{leaf}}}(B,\stl,i-1,k-1)+g^*(c_i,\iso)+1,\\
f_{c_{\textrm{center}}}^{v_{\textrm{leaf}}}(B,\stl,i-1,k)+g^*(c_i,\fin)
\end{array}
\right\}
& \text{otherwise}.
\end{cases}
\end{align*}

Next, we consider the case when $i\ge 3$ and $v_{\textrm{leaf}}\in\textrm{Child}_{\cT}(B)\setminus\{c_1\}$. 
\paragraph{Case $\typeB=\iso$:}\ 

\begin{align*}
&f_{c_{\textrm{center}}}^{v_{\textrm{leaf}}}(B,\iso,i,k)\\
&=
\begin{cases}
-\infty & \text{if $k=0$}\\
f_{c_{\textrm{center}}}^{v_{\textrm{leaf}}}(B,\iso,i-1,1)+g^*(c_i,\fin)
& \text{if $k=1$}\\
\max\left\{
\begin{array}{c}
f_{c_{\textrm{center}}}^{v_{\textrm{leaf}}}(B,\iso,i-1,1)+g^*(c_i,\iso),\\
f_{c_{\textrm{center}}}^{v_{\textrm{leaf}}}(B,\iso,i-1,2)+g^*(c_i,\fin)
\end{array}
\right\}
& \text{if $k=2$ \& $R(B)=\emptyset$}\\
\max\left\{
\begin{array}{c}
f_{c_{\textrm{center}}}^{v_{\textrm{leaf}}}(B,\iso,i-1,k-1)+g^*(c_i,\iso)+1,\\
f_{c_{\textrm{center}}}^{v_{\textrm{leaf}}}(B,\iso,i-1,k)+g^*(c_i,\fin)
\end{array}
\right\}
& \text{otherwise}.
\end{cases}
\end{align*}

\paragraph{Case $\typeB=\cl$:}\ 

\begin{align*}
&f_{c_{\textrm{center}}}^{v_{\textrm{leaf}}}(B,\cl,i,k)\\
&=
\begin{cases}
-\infty & \text{if $k=0$}\\
f_{c_{\textrm{center}}}^{v_{\textrm{leaf}}}(B,\cl,i-1,1)+g^*(c_i,\fin)
& \text{if $k=1$}\\
\max\left\{
\begin{array}{c}
f_{c_{\textrm{center}}}^{v_{\textrm{leaf}}}(B,\cl,i-1,k-1)+g^*(c_i,\iso)+1,\\
f_{c_{\textrm{center}}}^{v_{\textrm{leaf}}}(B,\cl,i-1,k)+g^*(c_i,\fin)
\end{array}
\right\}
& \text{otherwise}.
\end{cases}
\end{align*}

\paragraph{Case $\typeB=\stc$:}\ 


\begin{flalign}
&f_{c_{\textrm{center}}}^{v_{\textrm{leaf}}}(B,\stc,i,k)&\nonumber
\end{flalign}
\begin{empheq}
[left = {
= \empheqlbrace \,}]{alignat=2}
& -\infty  &\quad &    \text{if} \ k=0 \nonumber\\ 
& f_{c_{\textrm{center}}}^{v_{\textrm{leaf}}}(B,\stc,i-1,1)+g^*(c_i,\fin)   &\quad &    \text{if}\ k=1 \nonumber \\
& \max\left\{
\begin{array}{c}
f_{c_{\textrm{center}}}^{v_{\textrm{leaf}}}(B,\stc,i-1,k-1)+g^*(c_i,\iso),\\
f_{c_{\textrm{center}}}^{v_{\textrm{leaf}}}(B,\stc,i-1,k)+g^*(c_i,\fin)
\end{array}
\right\}    &     &    
\left\{\begin{aligned}
        & \text{if}\  k=2 \& |R(B)|=1,  \\
        &  \text{if}\  k=3 \& R(B)=\emptyset  \nonumber
\end{aligned} \right.\\
& \max\left\{
\begin{array}{c}
f_{c_{\textrm{center}}}^{v_{\textrm{leaf}}}(B,\stc,i-1,k-1)+g^*(c_i,\iso)+1,\\
f_{c_{\textrm{center}}}^{v_{\textrm{leaf}}}(B,\stc,i-1,k)+g^*(c_i,\fin)
\end{array}
\right\} &\quad &    \text{otherwise.} \nonumber
\end{empheq}

\paragraph{Case $\typeB=\stl$:}\ 


\begin{flalign}
&f_{c_{\textrm{center}}}^{v_{\textrm{leaf}}}(B,\stl,i,k)&\nonumber
\end{flalign}
\begin{empheq}
[left = {
= \empheqlbrace \,}]{alignat=2}
& -\infty  &\quad &    \text{if} \ k=0 \nonumber\\ 
& f_{c_{\textrm{center}}}^{v_{\textrm{leaf}}}(B,\stl,i-1,1)+g^*(c_i,\fin)   &\quad &    \text{if}\ k=1 \nonumber \\
& \max\left\{
\begin{array}{c}
f_{c_{\textrm{center}}}^{v_{\textrm{leaf}}}(B,\stl,i-1,k-1)+g^*(c_i,\iso),\\
f_{c_{\textrm{center}}}^{v_{\textrm{leaf}}}(B,\stl,i-1,k)+g^*(c_i,\fin)
\end{array}
\right\}    &     &    
\left\{\begin{aligned}
        & \text{if}\  k=2 \& |R(B)|=1,  \\
        &  \text{if}\  k=3 \& R(B)=\emptyset  \nonumber
\end{aligned} \right.\\
& \max\left\{
\begin{array}{c}
f_{c_{\textrm{center}}}^{v_{\textrm{leaf}}}(B,\stl,i-1,k-1)+g^*(c_i,\iso)+1,\\
f_{c_{\textrm{center}}}^{v_{\textrm{leaf}}}(B,\stl,i-1,k)+g^*(c_i,\fin)
\end{array}
\right\} &\quad &    \text{otherwise.} \nonumber
\end{empheq}


Here, we define $f_{c_{\textrm{center}}}(B,\typeB,|\textrm{Child}_{\cT}(B)|,k)
=\max_{v_{\textrm{leaf}}\in B\setminus\{c_{\textrm{center}}\}}\break
\{f_{c_{\textrm{center}}}^{v_{\textrm{leaf}}}(B,\typeB,|\textrm{Child}_{\cT}(B)|,k)\}$.

\subsubsection{Computing $f^*$}

After the computation of $f$ (Case (1)) and $f_{c_{\textrm{center}}}^{v_{\textrm{leaf}}}$  (Case (2)) , we compute $f^*$ by taking the maximum of them as follows. 
\begin{align*}
&f^*(B,\typeB,k)\\
&=\max\{f(B,\typeB,|\textrm{Child}_{\cT}(B)|,k),f_{c_1}(B,\typeB,|\textrm{Child}_{\cT}(B)|,k),\\
&\ldots,f_{c_{|\textrm{Child}_{\cT}(B)|}}(B,\typeB,|\textrm{Child}_{\cT}(B)|,k)\}. 
\end{align*}

\subsection{Recurrence relation in the cut nodes}
In this subsection, we will see how we compute  $g^*(c,\typec)$. 
As seen before, $\typec$ can take a value in $\{\stcu, \iso,\cl,\stc_1,\ldots,\stc_{|\textrm{Child}_{\cT}(c)|},\stl\}$. 
Since we see that $g^*(c,\stcu)$ can be computed from $g^*(c,\stc_\ell)$ for $1\le \ell \le |\textrm{Child}_{\cT}(c)|$ in Section \ref{subsec:DP:overviews}, we consider the remaining cases.
To do this, we define an auxiliary function $g(c,\typec,i)$. 
Let $B_1, B_2, \ldots, B_{|\textrm{Child}_{\cT}(c)|}$ be the block nodes belonging to $\textrm{Child}_{\cT}(c)$. For $B_1, B_2, \ldots, B_i \in \textrm{Child}_{\cT}(c)$, we define $g(c,\typec,i)$ as the maximum utilitarian welfare of coalition structures in $G[\bigcup_{j=1}^{i}V_{B_j}]$ in which the role of $c$ is $\typec$. 
By the definition of $g(c,\typec,i)$, we have:
\[
g^*(c,\typec) = g(c,\typec,|\textrm{Child}_{\cT}(c)|).
\]
Then we recursively compute each of $g(c,\typec,|\textrm{Child}_{\cT}(c)|)$'s  as follows.

\paragraph{Case $\typec=\iso$}\

We define the recurrence relation for $\typec=\iso$ as follows.
\begin{align*}
g(c,\iso,i)=& \begin{cases}
0 & i = 0 \\
g(c,\iso,i-1)+\max_{k}\{f^*(B_i,\iso,k)\}
& i> 0.
\end{cases}
\end{align*}
%
In this case, $c$ is a singleton due to $\typec=\iso$. Under this constraint, the coalition structure with the maximum utilitarian welfare forms the singleton $c$ plus the collection of the best coalitions of $G[V_{B_1}], G[V_{B_2}], \ldots, G[V_{B_i}]$.  
When $i=0$, let $g(c,\iso,0)=0$ as the base case.  
When $i>0$, the difference between the coalition structures in $G[\bigcup_{j=1}^{i}V_{B_j}]$ and $G[\bigcup_{j=1}^{i-1}V_{B_j}]$ is $G[V_{B_i}]$. Thus, $g(c,\iso,i)-g(c,\iso,i-1)$ is 
the maximum utilitarian welfare of the coalition structure in $G[V_{B_i}]$, i.e., $\max_{k}\{f^*(B_i,\iso,k)\}$.


\paragraph{Case $\typec=\cl$}\

We define the recurrence relation for $\typec=\cl$ as follows.
\begin{align*}
g(c,\cl,i)=& \begin{cases}
0 & i = 0\\
\max_{k}\{f^*(B_1,\cl,k)\}& i = 1\\
\max\left\{
\begin{array}{c}
g(c,\cl,i-1)+\max_{k}\{f^*(B_i,\iso,k)\},\\
g(c,\iso,i-1)+\max_{k}\{f^*(B_i,\cl,k)\}
\end{array}
\right\}
& i>1.
\end{cases}
\end{align*}
The above equation holds by the following observation: 
Since $c$ forms a clique, it makes a coalition with vertices in only one of the children nodes $B_1,\ldots, B_i
$. Thus,  the role of $c$ in one of $B_1,\ldots, B_i$ is $\cl$, and the roles in the others are $\iso$. 
When $i=0$, let $g(c,\cl,0)=0$ as the base case. 
When $i=1$, $g(c,\cl,1)=\max_{k}\{f^*(B_1,\cl,k)\}$ since $\typec=\cl$. 
When $i>1$, there are two scenarios: $c$ is a member of the clique in $B_{i'}$ for some $i'<i$, or a member of the clique in $B_i$. 
In the former case, the total utilitarian welfare is $g(c,\cl,i-1)+\max_{k}\{f^*(B_i,\iso,k)\}$, and in the latter case, it is $g(c,\iso,i-1)+\max_{k}\{f^*(B_i,\cl,k)\}$.  

\paragraph{Case $\typec=\stc_\ell$}\

We define the recurrence relation for $\typec=\stc_\ell$ as follows.

As the base case, when $i=0$
\begin{align*}
g(c,\stc_\ell,0)=0.
\end{align*}

When $i>0$, we obtain
\begin{align*}
&g(c,\stc_\ell,i)\\
&= \begin{cases}
g(c,\iso,i-1)+\max_{k}\{f^*(B_i,\iso,k)\} & \ell=0\\
\max\left\{
\begin{array}{c}
g(c,\stc_\ell,i-1)+\max_{k}\{f^*(B_i,\iso,k)\},\\
g(c,\stc_{\ell-1},i-1)+\max_{k}\{f^*(B_i,\stc,k)\}-\frac{(\ell-1)(\ell+2)}{\ell(\ell+1)}
\end{array}
\right\}
& 0<\ell<i\\
-\infty & \ell> i.
\end{cases}
\end{align*}
Note that we suppose $\ell\le i$, so as an invalid case, we define $g(c,\stc_\ell,i)=-\infty$ if $\ell> i$. 

Now we see the ordinary cases. 
When $i=0$, let $g(c,\stc_\ell,0)=0$ as the base case. 
When $i>0$, we consider the difference between the coalition structures in $G[\bigcup_{j=1}^{i}V_{B_j}]$ and $G[\bigcup_{j=1}^{i-1}V_{B_j}]$. 
If $\ell=0$, $c$ is a singleton, so the maximum utilitarian welfare is $g(c,\iso,i-1)+\max_{k}\{f^*(B_i,\iso,k)\}$.
When $\ell>0$, $c$ is either in the same coalition with vertices in $B_i$ ($\max_{k}\{f^*(B_i,\stc,k)\}$) or not ($\max_{k}\{f^*(B_i,\iso,k)\}$). 
Consider merging a star with $\ell-1$ leaves and a star with a leaf to a star with $\ell$ leaves as shown in Figure \ref{fig:star:merge}.
Here, the total utility of a star with $\ell-1$ leaves plus a star with one leaf is $2(\ell-1)/\ell+1$ and the total utility of a star with $\ell$ leaves is $2\ell/(\ell+1)$. Then, the difference of total utilities before and after merging is 
\begin{align*}
  \frac{2\ell}{\ell+1}-\left(\frac{2(\ell-1)}{\ell}+1\right)
  =-\frac{(\ell-1)(\ell+2)}{\ell(\ell+1)}.
\end{align*}
\begin{figure}[htb]
\centering
\includegraphics[width=\textwidth]{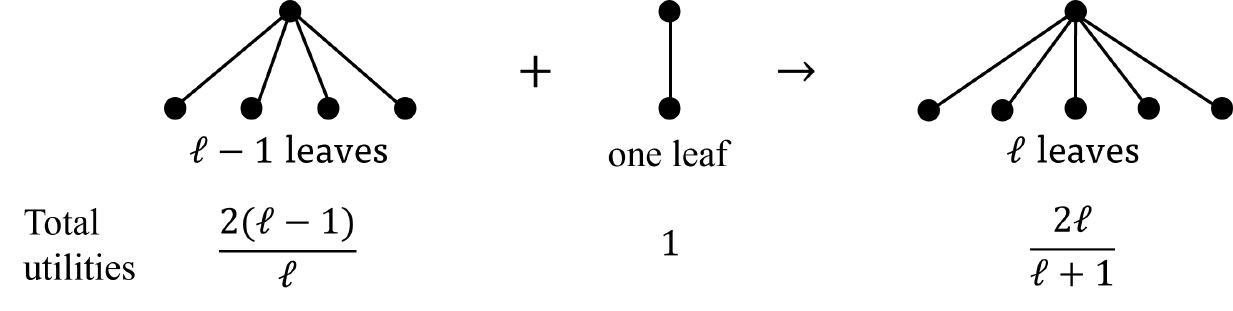}
\caption{The difference of the total utilities: a star with $\ell-1$ leaves plus a star with one leaf (before merge) and a star with $\ell$ leaves (after merge).} \label{fig:star:merge}
\end{figure}




\paragraph{Case $\typec=\stl$}\

We define the recurrence relation for $\typec=\stl$ as follows.
\begin{align*}
g(c,\stl,i)=& \begin{cases}
0 & i = 0\\
\max_{k}\{f^*(B_1,\stl,k)\} & i = 1\\ 
\max\left\{
\begin{array}{c}
g(c,\stl,i-1)+\max_{k}\{f^*(B_i,\iso,k)\},\\
g(c,\iso,i-1)+\max_{k}\{f^*(B_i,\stl,k)\}
\end{array}
\right\}
& i>1.
\end{cases}
\end{align*}
The above equation holds by the following observation: 
Since $c$ is a leaf of a star, it makes a coalition with vertices in only one of the children nodes $B_1,\ldots, B_i$. Thus, the role of $c$ in one of $B_1,\ldots, B_i$ is $\stl$, and the roles in the others are $\iso$. 
When $i=0$, let $g(c,\stl,0)=0$ as the base case. 
When $i=1$, $g(c,\stl,1)=\max_{k}\{f^*(B_1,\stl,k)\}$ since $\typec=\stl$. 
When $i>1$, $c$ joins a star coalition as a leaf in one of $B_1,\ldots,B_i$, and the role of $c$ for the other clique nodes is $\iso$. Here, there are two scenarios: $c$ is a leaf of the star in $B_{i'}$ for some $i'<i$, or a leaf of the star in $B_i$. 
In the former case, the total utilitarian welfare is $g(c,\stl,i-1)+\max_{k}\{f^*(B_i,\iso,k)\}$, and in the latter case, it is $g(c,\iso,i-1)+\max_{k}\{f^*(B_i,\stl,k)\}$.  


\subsection{Complexity}
In this subsection, we analyze the computational complexity. 
As seen in the previous subsections, our algorithm is a simple dynamic programming algorithm, though the recurrence relations are complicated. 
Since the time complexity of a dynamic programming algorithm depends on the number of states and how we evaluate recurrence relations. 
In the following, we see how many steps are required to compute each recurrence relation.

First, we consider the computational time at the block nodes. 
In a block node $B$, we compute the values of $f(B,\typeB,i,k)$ and $f_{c_{\textrm{center}}}^{v_{\textrm{leaf}}}(B,\typeB,i,k)$ for $c_{\textrm{center}}, v_{\textrm{leaf}}\in B$ in our dynamic programming algorithm. After computing all of them, we obtain $f^*(B,\typeB,k)$ in $O(\Delta)$ time. 
For a block node $B$, if we have the values of $g^*(c,\typec)$'s for each $c\in \textrm{Child}_{\cT}(B)$, we can compute each of $f(B,\typeB,i,k)$'s and $f_{c_{\textrm{center}}}^{v_{\textrm{leaf}}}(B,\typeB,i,k)$'s in a constant time. Since $|\mathcal{B}|\le n$, $|\texttt{T}_B|=O(1)$, $\max_{B\in \mathcal{B}}|\textrm{Child}_{\cT}(B)|\le \Delta$, and $0\le k\le \Delta$, all of $f(B,\typeB,i,k)$'s  can be computed in $O(n\Delta^2)$. Similarly, all of $f_{c_{\textrm{center}}}^{v_{\textrm{leaf}}}(B,\typeB,i,k)$'s can be computed in $O(n\Delta^2\cdot \Delta^2) = O(n\Delta^4)$ since there are at most $\Delta^2$ pairs of $c_{\textrm{center}}$ and $v_{\textrm{leaf}}$.

Next, we consider the computation time for $g^*(c,\typec)$ at cut node $c$. 
For $c\in \mathcal{C}$, $0\le i\le |\textrm{Child}_{\cT}(c)|\le \Delta$, and $\typec\in \{\iso,\cl,\stc_1,\ldots,\stc_{|\textrm{Child}_{\cT}(c)|},\stl\}$, we can compute $g(c,\typec,i)$ in a constant time, and hence we obtain $g^*(c,\typec) =g(c,\typec,|\textrm{Child}_{\cT}(c)|)$  in $O(\Delta)$ time.  For $\stcu$, we can compute $g^*(c,\stcu)$ in $O(\Delta)$ time by using  $g^*(c,\stc_{1}), \ldots, g^*(c,\stc_{|\textrm{Child}_{\cT}(c)|})$. 
Since $|\mathcal{C}|\le n$ and $|\texttt{T}_c| = O(\Delta)$, the total running time for $g^*$ is $O(n\Delta^2)$.


Therefore, our dynamic programming algorithm returns the maximum utilitarian welfare coalition structure in $O(n\Delta^4)$ time.

\section{Graphs of Bounded Treewidth or Vertex Cover Number}\label{chap:XP}
In this section, we design (pseudo)polynomial-time algorithms on graphs of bounded treewidth and bounded vertex cover number  for computing a maximum utilitarian welfare coalition structure and a maximum egalitarian welfare coalition structure.


\subsection{Maximizing utilitarian welfare}\label{subsec:utilitarian}

\subsubsection{Graphs with bounded treewidth}

\begin{theorem}\label{thm:treewidth:weighted}
\textsc{Utilitarian Welfare Maximization} can be computed in time $(nW)^{O(\omega)}$ where $\omega$ is the treewidth of an input graph and $W$ is the maximum absolute weight of edges.
\end{theorem}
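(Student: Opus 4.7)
The plan is a standard bag-dynamic-programming on a nice tree decomposition of width $\omega$, carefully extended to track enough numerical information to evaluate the non-linear utilitarian objective $\uw(\C)=\sum_{C\in\C}\frac{2|E(C)|}{|C|}$. First I would compute a nice tree decomposition $(T,\{X_t\})$ of width $\omega$ and root it at a node with empty bag. For each node $t$, the DP state encodes (i) a set-partition $\pi$ of $X_t$ (two bag vertices are in the same class iff they already belong to the same coalition being constructed in $G_t$), and, for each class $Q\in\pi$, the pair $(s_Q,w_Q)$ recording the current size $|C_Q|\le n$ of the partial coalition $C_Q\subseteq V_t$ and the sum $w_Q\in\{-W\binom{n}{2},\ldots,W\binom{n}{2}\}$ of edge weights inside $G[C_Q]$. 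The table $A_t(\pi,(s_Q,w_Q)_{Q\in\pi})$ stores the maximum total contribution $W_{\mathrm{done}}$ accumulated so far from coalitions that have already lost all representatives in $X_t$. This state space has size $2^{O(\omega\log\omega)}\cdot ((n+1)\cdot O(n^2W))^{\omega+1} = (nW)^{O(\omega)}$.

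Next I would describe the transitions for the three node types. At an \emph{introduce} node that adds $v$ to $X_{t'}$, I would try each way of placing $v$ (as a new singleton class or joined to some existing class $Q'$) and update $s_{Q'}\leftarrow s_{Q'}+1$; no edge weights are added here. At a \emph{forget} node for $v$ I would, for every $u\in X_{t'}\cap N(v)$ lying in the same class as $v$, add $w_{uv}$ to the corresponding $w_Q$ (processing each edge exactly once, at the forget of its first-forgotten endpoint), then remove $v$ from its class $Q'$; if the resulting class becomes empty then the coalition $C_{Q'}$ is irrevocably finalized (its vertices cannot reappear in any ancestor bag), so I add $2w_{Q'}/s_{Q'}$ to $W_{\mathrm{done}}$ and drop the class. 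At a \emph{join} node with children $t_1,t_2$ and $X_t=X_{t_1}=X_{t_2}$, the resulting partition $\pi$ is exactly the join of $\pi^L,\pi^R$ in the partition lattice; for each class $Q$ of $\pi$, obtained by unifying classes $Q^L_1,\dots,Q^L_a$ of $\pi^L$ and $Q^R_1,\dots,Q^R_b$ of $\pi^R$, I would set $s_Q=\sum_i s_{Q^L_i}+\sum_j s_{Q^R_j}-|Q|$ and $w_Q=\sum_i w^L_{Q^L_i}+\sum_j w^R_{Q^R_j}$ (no double count because, under the ``first-forgotten'' convention, every edge was handled in exactly one subtree); the two $W_{\mathrm{done}}$ contributions are simply summed. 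The answer is read off at the root, whose empty bag forces every coalition to be finalized.

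Correctness is then by induction on the tree, using the defining property that each coalition of any feasible partition of $G$ restricts to a well-defined equivalence class on every bag it meets, and that its utilitarian contribution $2|E(C)|/|C|$ is only added once, at the unique node where it is finalized. For the running time, the number of states per bag is $(nW)^{O(\omega)}$ as computed, each introduce/forget transition takes time polynomial in $\omega$ and $\log(nW)$, and each join combines pairs of child states in $(nW)^{O(\omega)}$ time; since a nice tree decomposition has $O(\omega n)$ nodes, the total time is $(nW)^{O(\omega)}$, and in the unweighted case $W=1$ gives $n^{O(\omega)}$.

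The main obstacle, and the step that deserves the most care, is the edge-accounting at join nodes: because a coalition need not be connected, its partial data may arrive through both subtrees, and naive addition of $w^L_{Q^L_i}+w^R_{Q^R_j}$ risks either double-counting or omitting edges whose endpoints both lie in $X_t$. Committing to the convention of charging every edge to the forget node of its first-forgotten endpoint makes this accounting clean (no edge inside $X_t\cap Q$ has yet been charged at the join, since both endpoints are still present), and it is the single design choice that lets the join formula remain additive with only the size correction $-|Q|$.
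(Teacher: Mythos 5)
Your proposal is correct and follows essentially the same route as the paper: a DP over a nice tree decomposition whose state records a partition of the bag together with, for each class, the size and internal edge-weight of the partial coalition, finalizing a coalition's contribution $2|E(C)|/|C|$ when its last bag vertex is forgotten, giving $(nW)^{O(\omega)}$ states and time. The only (harmless) difference is bookkeeping: the paper charges an edge's weight at the introduce node of its later endpoint and then subtracts $\sum_{u,v\in C_i}w_{uv}$ at join nodes to undo double counting, whereas you charge each edge at the forget node of its first-forgotten endpoint so the join is purely additive.
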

\begin{proof}


We give an algorithm based on dynamic programming over a nice tree decomposition $\mathcal{T}=(T,\{X_t\}_{t\in V(T)})$ with $O(\omega n)$ nodes.

For each node $t$, let $\mathcal{C}_t=\{C_1,\ldots,C_{|\mathcal{C}_t|}\}$ be a partition of $X_t$. Each $C_i\in \mathcal{C}_t$ represents the set of vertices in $X_t$ that are contained in the same coalition in $G_t$.
For each $C_i\in \mathcal{C}_t$, let $D_i$ be the coalition in $G_t$ such that $C_i\subseteq D_i$. Moreover, let $\mathcal{D}_t=\{D_1,\ldots,D_{|\mathcal{D}_t|}\}$ for each node $t$.
For $D_i\in \mathcal{D}_t$, we define $n_i=|V(D_i)|$ and $m_i=|E(D_i)|$.  
For each node $t$, we store all the $\mathcal{C}_t$'s and the corresponding $n_i$ and $m_i$ for $1\le i\le |\mathcal{C}_t|$ with the maximum utilitarian welfare. In other words, we define the DP table 
$\DP[t,\mathcal{C}_t,(n_1,m_1),\ldots,(n_{|\mathcal{C}_t|},m_{|\mathcal{C}_t|})]$ 
as the maximum utilitarian welfare in the induced subgraph $G[V_t\setminus \bigcup_{D_i\in \mathcal{D}_t}D_i]$ with $n_i=|V(D_i)|$ and $m_i=|E(D_i)|$ for each $C_i\in\mathcal{D}_t$.
Since $|X_r|=0$, $\DP[r,\{\emptyset\},\emptyset]$
represents the maximum utilitarian welfare of $G$.
We recursively compute each $\DP[t,\mathcal{C}_t,(n_1,m_1),\ldots,(n_{|\mathcal{C}_t|},m_{|\mathcal{C}_t|})]$ along a given nice tree decomposition from its leaves.


\paragraph{Leaf node $t$:}
If $t$ is a leaf node, $X_t=\emptyset$. Thus, we define $\DP[t,\{\emptyset\},\emptyset]=0$ as the base case.

\paragraph{Introduce node $t$:}
For an introduce node $t$, it has a child node $t'$ such that $X_t=X_{t'}\cup\{v\}$ for some $v\in V$.
Without loss of generality, let $C_{|\mathcal{C}_t|}\in \mathcal{C}_t$ be the coalition that contains $v\in X_t$. 
Then, the recurrence relation of a DP table is defined as follows:
\begin{align*}
&\DP[t,\mathcal{C}_t,(n_1,m_1), \ldots,(n_{|\mathcal{C}_t|},m_{|\mathcal{C}_t|})]\\
&=\begin{cases}
\DP[t',\mathcal{C}_t\setminus C_{|\mathcal{C}_t|} \cup (\{C_{|\mathcal{C}_t|}\setminus \{v\}),\mathbf{m}]
& C_{|\mathcal{C}_t|}\neq\{v\}\\
\DP[t',\mathcal{C}_t\setminus \{\{v\}\},\mathbf{m'}] & C_{|\mathcal{C}_t|}=\{v\}\land  (n_{|\mathcal{C}_t|},m_{|\mathcal{C}_t|})=(1,0)\\
-\infty & \text{otherwise},
\end{cases}
\end{align*}
where\ $\mathbf{m}=((n_1,m_1),\ldots, (n_{|\mathcal{C}_t|-1},m_{|\mathcal{C}_t|-1}), (n_{|\mathcal{C}_t|}-1,m_{|\mathcal{C}_t|}-\sum_{u\in N(v)\cap C_{|\mathcal{C}_t|}}w_{vu}))$ and $\mathbf{m'}=((n_1,m_1),\ldots,(n_{|\mathcal{C}_t|},m_{|\mathcal{C}_t|}),\ldots,(n_{|\mathcal{C}_t|-1},m_{|\mathcal{C}_t|-1})).$

In the recurrence relation in an introduce node, the first case is when the coalition containing $v$ is not a singleton. In this case, $v$ is added to the existing coalition. Thus, the number of vertices and weights in the coalition increases by $1$ and the sum of weights of edges between $v$ and vertices in the coalition, respectively. Coalitions except for $C_{|\mathcal{C}_t|}$ do not change in $t$ since they are irrelevant to $v$.  
Note that the maximum utilitarian welfare in the induced subgraph $G[V_t\setminus \bigcup_{D_i\in \mathcal{D}_t}D_i]$ does not change by the definition of an introduce node.
In the second case, $v$ forms a singleton as a new coalition $C_{|\mathcal{C}_t|}=\{v\}$ in $t$. Coalitions except for $\{v\}$ do not change in $t$.  The third case represents the invalid case.

\paragraph{Forget node $t$:}
Let $t'$ be the child of $t$ and $v$ the node forgotten at $t$. Moreover, let $D_v$ be the coalition that contains $v$ in $t'$. Then, the recurrence relation of a DP table is defined as follows:
\begin{align*}
&\DP[t,\mathcal{C}_t,(n_1,m_1),\ldots,(n_{|\mathcal{C}_t|},m_{|\mathcal{C}_t|})]\\
&=
\max\left\{
\begin{array}{c}
\DP[t',\mathcal{C}_t\cup\{\{v\}\},(n_1,m_1),\ldots,(n_{|\mathcal{C}_t|},m_{|\mathcal{C}_t|}),(n_v,m_v)]+\frac{2m_v}{n_v},\\
\max_{C\in \mathcal{C}_t}{\DP[t',\mathcal{C}_t\setminus\{C\}\cup\{C\cup\{v\}\},(n_1,m_1),\ldots,(n_{|\mathcal{C}_t|},m_{|\mathcal{C}_t|})]}
\end{array} 
\right\},
\end{align*}
where $n_v = |V(D_v)|$ and $m_v = |E(D_v)|$.

The former case is when $v$ is the last vertex of its coalition, denoted by $D_v\in \mathcal{D}_{t'}$, in $t'$, and $D_v$ disappears in $t$ because $v$ is forgotten in $t$. Thus, $D_v$ never changes thereafter. Then the maximum utilitarian welfare in $G[V_t\setminus \bigcup_{D_i\in \mathcal{D}_t}D_i]$ increases by $2m_v/n_v$, which is the utilitarian welfare of $D_v$. 
The latter case is when some vertex in $D_v$ appears in  $X_t$. In this case, the maximum utilitarian welfare in $G[V_t\setminus \bigcup_{D_i\in \mathcal{D}_t}D_i]$ does not change. Since $v$ is forgotten, no edge incident to $v$ appears after node $t$, and hence $v$ can safely be removed from the coalition $D_v$.  By taking the maximum value among all the cases, we can compute the maximum utilitarian welfare in $G[V_t\setminus \bigcup_{D_i\in \mathcal{D}_t}D_i]$ with respect to $\mathcal{C}_t$ and $(n_1,m_1),\ldots,(n_{|\mathcal{C}_t|},m_{|\mathcal{C}_t|})$.


\paragraph{Join node $t$:}
Let $t_1,t_2$ be two children of a join node $t$.
At the join node, we integrate the results of $G_{t_1}$ and $G_{t_2}$. It is easily seen that the number of vertices and the sum of weights of edges in each coalition $D\in \mathcal{D}_t$ in $G_t$ can be computed by summing up them in $G_{t_1}$ and $G_{t_2}$ without double counting. Thus, the recurrence relation for join nodes is defined as follows: 
\begin{align*}
&\DP[t,\mathcal{C}_t,(n_1,m_1),\ldots,(n_{|\mathcal{C}_t|},m_{|\mathcal{C}_t|})] \\
=& 
\max\left\{\begin{array}{ll}
 & \DP[t_1,\mathcal{C}_t,(n'_1,m'_1),\ldots,(n'_{|\mathcal{C}_t|},m'_{|\mathcal{C}_t|})] \\
+& \DP[t_2,\mathcal{C}_t,(n^{\prime\prime}_1,m^{\prime\prime}_1),\ldots,(n^{\prime\prime}_{|\mathcal{C}_t|},m^{\prime\prime}_{|\mathcal{C}_t|})]
\end{array}
\right\},
\end{align*}
where $n_i=n'_i+n^{\prime\prime}_i-|C_i|$ and $m_i=m'_i+m^{\prime\prime}_i-\sum_{u,v\in C_i}w_{uv}$ for each $C_i\in \mathcal{C}_t$.

\paragraph{Complexity:}
Since $0\le n_i \le n$ and $-n^2 W \le m_i\le n^2 W$, the size of a DP table of each node is $\omega^{O(\omega)}\cdot (nW)^{O(\omega)} = (nW)^{O(\omega)}$. Each recurrence relation can be computed in time $(nW)^{O(\omega)}$. Since the number of nodes is $O(n\omega)$, the total running time is $(nW)^{O(\omega)}$.
\end{proof}

\subsubsection{Graphs of bounded vertex cover number}
\

Next, we give a polynomial-time algorithm for computing the maximum utilitarian welfare on graphs of bounded vertex cover numbers.

\begin{theorem}\label{thm:vc:weighted}
\textsc{Utilitarian Welfare Maximization} can be computed in time $n^{O(\tau)}$ where $\tau$ is the vertex cover number of an input graph. 
\end{theorem}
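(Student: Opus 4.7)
The plan is to exploit the structure induced by a vertex cover. Let $S \subseteq V$ be a minimum vertex cover with $|S| = \tau$, which can be found in $n^{O(\tau)}$ time by brute-force enumeration over $\tau$-subsets, and let $I = V \setminus S$, which is an independent set. Because $I$ is independent, any coalition consisting entirely of $I$-vertices has no internal edges and contributes $0$ to utilitarian welfare; hence, without loss of generality, each coalition in an optimal structure either contains at least one vertex of $S$ or is a singleton of $I$.

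Building on this, I would enumerate (i) every partition $\mathcal{P}_S = \{T_1, \ldots, T_k\}$ of $S$, of which there are at most a function of $\tau$ many (the $\tau$-th Bell number), and (ii) for each partition, every final size profile $(n_1, \ldots, n_k)$ with $|T_j| \le n_j \le n$ and $\sum_j (n_j - |T_j|) \le |I|$, of which there are at most $n^\tau$. Each such pair prescribes that coalition $D_j$ will consist of $T_j$ together with exactly $n_j - |T_j|$ vertices from $I$, to be determined.

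For a fixed $(\mathcal{P}_S, (n_j))$, the utilitarian welfare equals
\begin{equation*}
\sum_{j=1}^{k} \frac{2}{n_j}\Bigl(W_j^S + \sum_{v \in J_j} w_j(v)\Bigr),
\end{equation*}
where $J_j \subseteq I$ is the subset assigned to $D_j$, $W_j^S$ is the sum of edge weights inside $T_j$, and $w_j(v) = \sum_{u \in T_j \cap N_G(v)} w_{uv}$. The first summand is constant once $(\mathcal{P}_S, (n_j))$ is fixed, while the second is linear in the assignment. Maximizing welfare therefore reduces to a capacitated bipartite assignment problem: each $v \in I$ yields profit $2 w_j(v)/n_j$ if placed into $D_j$, the slot capacity of $D_j$ is $n_j - |T_j|$, and unselected $I$-vertices stay as singletons with profit $0$. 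This is solvable in polynomial time via minimum-cost bipartite matching (e.g., the Hungarian method or min-cost flow), and negative $w_j(v)$ values pose no difficulty.

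Taking the maximum welfare over all enumerated $(\mathcal{P}_S, (n_j))$ yields the answer in time $n^\tau \cdot \mathrm{poly}(n) \cdot f(\tau) = n^{O(\tau)}$. The key conceptual obstacle is the nonlinear objective $2|E(C)|/|C|$: by pre-committing to the denominator $n_j$ of each coalition, the residual optimization over $I$-vertex assignment becomes linear and hence tractable, and the enumeration over partitions of $S$ together with size profiles is what absorbs the nonlinearity into the $n^{O(\tau)}$ factor.
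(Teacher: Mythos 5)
Your proposal is correct and follows essentially the same route as the paper: both fix a minimum vertex cover $S$, argue that coalitions inside the independent set $V\setminus S$ contribute nothing (the paper's Lemma~\ref{lem:vc:coalitions}), enumerate the partition of $S$ together with the coalition sizes to linearize the objective $2|E(C)|/|C|$, and then solve the residual assignment of independent-set vertices as a capacitated assignment problem. The only difference is cosmetic: the paper solves that final subproblem by a dynamic program (\textsc{Max $k$-bin packing}, Lemma~\ref{lem:binpacking}) rather than by min-cost bipartite matching, and both yield the $n^{O(\tau)}$ bound.
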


To prove this, we first show the key lemma as follows.
\begin{lemma}\label{lem:vc:coalitions}
For any graph $G$, there exists an optimal coalition structure with at most $\vc+1$ coalitions.
\end{lemma}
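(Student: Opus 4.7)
The plan is to take any optimal coalition structure and rearrange only the coalitions that live entirely inside the independent set complement of a minimum vertex cover, without changing the utilitarian welfare.

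First, let $S\subseteq V$ be a minimum vertex cover of $G$, so $|S|=\vc$ and $I:=V\setminus S$ is an independent set. Let $\mathcal{C}^*$ be any optimal coalition structure. Partition $\mathcal{C}^*$ into two groups: those coalitions $C$ with $C\cap S\neq\emptyset$, and those with $C\subseteq I$. Since each coalition in the first group consumes at least one distinct vertex of $S$, there are at most $\vc$ such coalitions.

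Next, I would observe that every coalition $C\subseteq I$ induces an edgeless subgraph (because $I$ is independent), so by Property~\ref{prop:Bilo} we have $\uw(C)=0$. I would then construct a new coalition structure $\mathcal{C}'$ by keeping every coalition in the first group unchanged and replacing all coalitions of the second group by their union $C_0:=\bigcup_{C\in\mathcal{C}^*, C\subseteq I} C$ (discarding $C_0$ if it is empty). Since $C_0\subseteq I$ is still edgeless, $\uw(C_0)=0$, and every other coalition is untouched; therefore $\uw(\mathcal{C}')=\uw(\mathcal{C}^*)$, so $\mathcal{C}'$ is also optimal. By construction $|\mathcal{C}'|\le \vc+1$, proving the lemma.

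There is no real obstacle: the only point worth double-checking is that merging the $I$-only coalitions does not affect the utilities of vertices in the other coalitions, which is immediate because those vertices' incident edges and coalition memberships are unchanged.
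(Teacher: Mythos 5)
Your argument is correct and is essentially the same as the paper's: both identify a minimum vertex cover $S$, note that at most $\vc$ coalitions can meet $S$, observe that coalitions contained in the independent set $V\setminus S$ have utilitarian welfare $0$ by Property~\ref{prop:Bilo}, and merge them into a single (still edgeless, hence zero-welfare) coalition without changing the total welfare. Your write-up just makes the merging step slightly more explicit than the paper does.
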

\begin{proof}
Suppose that there exists an optimal coalition structure with more than $\vc+1$ coalitions. Let $S$ be a minimum vertex cover. Then at most $\vc$ coalitions contain vertices in $S$. Thus, other coalitions only contain vertices in $V\setminus S$. Since $V\setminus S$ is an independent set, the utilitarian welfare of such a coalition is $0$. Therefore, we can merge them into one coalition without decreasing utilitarian welfare. This means that there exists an optimal coalition structure with at most $\vc+1$ coalitions.
\end{proof}


Then we design an algorithm for bounded vertex cover graphs. We first guess the number of coalitions in the optimal coalition structure. Let $b (\le \vc+1)$ be the number of coalitions. 
We further guess the number of vertices in each coalition. Let $C_1, \ldots, C_b$ be $b$ coalitions and  $n_1,\ldots, n_b$ be the number of vertices in them. Note that $n_i\ge 1$ for $1\le i\le b$. The number of possible patterns of $n_1,\ldots, n_b$ is at most $n^b$.

 Let $S$ be a minimum vertex cover of size $\vc$ in $G$. 
 We guess assignments of vertices in $S$ to $b$ coalitions. 
 The number of such possible assignments is at most $b^\vc$. If there exists a coalition $C_i$ such that the number of vertices in it exceeds $n_i$, then we immediately reject such an assignment.

 Finally, we consider assignments of vertices in $V\setminus S$ to coalitions. Suppose that the size of  coalition $C_j$ is fixed at $n_j$. Then if $v\in V\setminus S$ is assigned to $C_j$, the increase of the utilitarian welfare is computed as follows:
 \begin{align*}
    a_{vj}=\sum_{u\in N(v)\cap C_j\cap S}2w_{vu}/n_j.
\end{align*}
Note that $V\setminus S$ is an independent set. 

In order to  find a maximum utilitarian welfare coalition structure, all we need to do is to find an assignment maximizing the sum of values $a_{vj}$ for $v\in  V\setminus S$ under the capacity condition. This can be formulated as the following bin packing problem.

\begin{itembox}[l]{\textsc{Max $k$-bin packing}}
\begin{description}
    \item[Input:] $n$ items with size $1$,
    $k$ bins with capacity $c_1,\ldots,c_k (\le n)$, and
    the value $a_{ij}$ when item $i\in \{1,\ldots,n\}$ is assigned to bin $j\in \{1,\ldots,k\}$
    \item[Output:] an assignment of items to $k$ bins that maximizes the maximum value when exactly $c_1,\ldots,c_b$ items are assigned to each bottle
\end{description}
\end{itembox}
\textsc{Max $k$-bin packing} can be solved in time $n^{O(k)}$ by a simple dynamic programming algorithm.

\begin{lemma}\label{lem:binpacking}
    \textsc{Max $k$-bin packing} can be solved in time $n^{O(k)}$.
\end{lemma}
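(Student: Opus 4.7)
The plan is to give a straightforward dynamic programming algorithm that processes items one at a time while tracking, for each bin, the number of items currently placed in it. Since each bin has capacity at most $n$ and there are $k$ bins, the state space stays polynomial for fixed $k$.

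Concretely, I would define a table
\begin{equation*}
D[i,(s_1,\ldots,s_k)] = \max\Bigl\{\sum_{j=1}^{k}\sum_{\ell \in A_j} a_{\ell j}\Bigr\},
\end{equation*}
where the maximum is over all assignments of items $\{1,\ldots,i\}$ to bins such that bin $j$ receives exactly $s_j$ items (writing $A_j$ for the set of items assigned to bin $j$). Here $0 \le s_j \le c_j$ for each $j$ and $\sum_j s_j = i$; otherwise we set $D[i,(s_1,\ldots,s_k)]=-\infty$. The base case is $D[0,(0,\ldots,0)] = 0$. The recurrence is obtained by deciding which bin receives item $i$:
\begin{equation*}
D[i,(s_1,\ldots,s_k)] = \max_{j:\, s_j \ge 1}\bigl\{D[i-1,(s_1,\ldots,s_j-1,\ldots,s_k)] + a_{ij}\bigr\}.
\end{equation*}
The desired optimum is $D[n,(c_1,\ldots,c_k)]$, which exactly encodes the condition that bin $j$ contains precisely $c_j$ items.

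For the complexity analysis, I would first note that the number of states is at most $(n+1) \cdot \prod_{j=1}^k (c_j+1) = O(n^{k+1})$ because each $c_j \le n$, and each entry is computed by taking a maximum over at most $k$ predecessors, so the total running time is $O(k \cdot n^{k+1}) = n^{O(k)}$. Correctness follows from a straightforward induction on $i$: any optimal assignment of items $\{1,\ldots,i\}$ either assigns item $i$ to some bin $j$ with $s_j \ge 1$, in which case restricting to items $\{1,\ldots,i-1\}$ is optimal for the profile $(s_1,\ldots,s_j-1,\ldots,s_k)$, and conversely any optimal assignment for such a profile can be extended by placing item $i$ in bin $j$. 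I do not anticipate any real obstacle; the only minor subtlety is making sure that we only allow transitions into states whose coordinates are valid (nonnegative and bounded by $c_j$), which the max is taken over in the recurrence above.
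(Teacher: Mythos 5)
Your proposal is correct and is essentially identical to the paper's own proof: the same dynamic program indexed by the item prefix and the per-bin fill counts, the same recurrence over the bin receiving the current item, and the same $O(k \cdot n^{k+1}) = n^{O(k)}$ bound. Your added remarks on the induction and on restricting to valid states only make the argument slightly more explicit than the paper's.
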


\begin{proof}
Let $\DP[i;d_1,\ldots,d_k]$ be the maximum value obtained by assigning items $1,\ldots,i$ to bins so that the capacity of bin $j$ is $d_j$ for each $j$. Then, $\DP[i;d_1,\ldots,d_k]$ can be computed by dynamic programming. First, we initialize the DP table as follows.
\begin{align*}
\DP[0;0,\ldots,0]=0.
\end{align*}
Next, we define the recurrence relation of $\DP[i;d_1,\ldots,d_k]$ as follows:
\begin{align*}
&\DP[i;d_1,\ldots,d_k]\\
&=\max
 \left\{ \max_{j\in \{1,\ldots,k\}} \{\DP[i-1;d_1,\ldots,d_{j}-1,\ldots,d_k]+a_{vj}\}\right\}.
\end{align*}
It is not hard to see the correctness of the recurrence relation. Since the size of DP table is at most $n^{k+1}$ and the recurrence relation can be computed in $O(k)$, the total running time is $O(k\cdot n^{k+1}) = n^{O(k)}$.
\end{proof}

By replacing items by vertices in $V\setminus S$ and bins by coalitions and letting $c_j=n_j - |C_j\cap S|$, we can reduce the above assignment problem to \textsc{Max $k$-bin packing}. By Lemma~\ref{lem:binpacking}, the utilitarian welfare maximization can be computed in time $\sum_{b=1}^{\tau+1}n^b\cdot b^{\vc}\cdot n^{O(b)} = n^{O(\tau)}$. Therefore, Theorem ~\ref{thm:vc:weighted} holds.


\subsection{Maximizing egalitarian welfare}
In this subsection, we consider a coalition structure that maximizes egalitarian welfare. We first design a pseudopolynomial-time algorithm on bounded treewidth graphs. 
Then we show that the egalitarian welfare maximization is NP-hard on bounded vertex cover number graphs. This is in contrast to the utilitarian welfare maximization.

\begin{theorem}\label{thm:max-min:weighted}
\textsc{Egalitarian Welfare Maximization} can be computed in time $(nW)^{O(\omega)}$ where $\omega$ is the treewidth of an input graph and $W$ is the maximum absolute weight of edges.
\end{theorem}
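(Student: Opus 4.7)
The plan is to adapt the tree-decomposition DP of Theorem~\ref{thm:treewidth:weighted} so that the algorithm tracks the partial contribution of incident edges at every bag vertex, enabling each vertex's utility to be \emph{finalized} exactly when it leaves the bag. The DP state at each node $t$ of a nice tree decomposition $\mathcal{T}=(T,\{X_t\}_{t\in V(T)})$ keeps, in addition to the partition $\mathcal{C}_t=\{C_1,\ldots,C_{|\mathcal{C}_t|}\}$ of $X_t$ and the guessed total size $n_i$ of the coalition $D_i\supseteq C_i$, one more integer $s_v\in\{-nW,\dots,nW\}$ for every $v\in X_t$ recording $\sum_{u}w_{uv}$ over those coalition-mates $u$ of $v$ that have already been processed in $G_t$. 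The entry $\DP[t,\mathcal{C}_t,(n_i)_i,(s_v)_{v\in X_t}]$ stores the maximum achievable minimum utility taken over vertices of $V_t\setminus\bigcup_i D_i$ whose coalitions are fully contained in $G_t$.

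First I would initialize $\DP[t,\{\emptyset\},\emptyset,\emptyset]=+\infty$ at every leaf. At an \emph{introduce} node that adds $v$ to a class $C_i\in\mathcal{C}_t$, the partial sums inherited from the child $t'$ are updated by $s_u\gets s_u+w_{uv}$ for every $u\in N(v)\cap C_i\cap X_{t'}$, and $v$ enters with $s_v=\sum_{u\in N(v)\cap C_i\cap X_{t'}}w_{uv}$ (or $s_v=0$ if $\{v\}$ is a new class); no vertex is finalized, so the DP value is copied. At a \emph{forget} node that removes $v$ lying in a coalition with guessed size $n$, the key observation is that the stored $s_v$ is already the \emph{final} edge-weight sum of $v$ inside its coalition, because the tree-decomposition property forces every neighbor of $v$ to share a bag with $v$ at some moment. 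Hence $v$'s utility $s_v/n$ is now well-defined, and the new DP value is the maximum, over the child's choices of where $v$ was placed (including the case when $\{v\}$ was a singleton class completing its coalition), of $\min\!\left(\DP[t',\ldots],\ s_v/n\right)$. At a \emph{join} node, I would take the maximum, over decompositions $s_v=s_v^{(1)}+s_v^{(2)}-\sum_{u\in N(v)\cap C_i\cap X_t}w_{uv}$ and matching size-guesses, of $\min\!\left(\DP[t_1,\ldots],\DP[t_2,\ldots]\right)$; the subtraction corrects the double-counting of bag-internal edges, exactly as in the $m_i$ recurrence of the utilitarian proof. The final answer is $\DP[r,\{\emptyset\},\emptyset,\emptyset]$.

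Correctness rests on two claims: (i) $s_v$ really is the final edge-weight sum at $v$'s forget step, which follows from the standard edge-bag property of a tree decomposition; and (ii) the partition $\mathcal{C}_t$ together with the size guesses $n_i$ encode every coalition structure on $G$ consistently with the computed utilities. Both are essentially already handled in Theorem~\ref{thm:treewidth:weighted}; the genuinely new ingredient is the per-vertex parameter $s_v$, which is what allows a \emph{min}-type objective (as opposed to the additive utilitarian one) to be composed along the tree. For the complexity, a bag of size $O(\omega)$ yields $\omega^{O(\omega)}$ partitions $\mathcal{C}_t$, at most $n^{O(\omega)}$ guesses for $(n_i)_i$, and at most $(nW)^{O(\omega)}$ vectors $(s_v)_v$, giving $(nW)^{O(\omega)}$ states per node; each transition inspects $(nW)^{O(\omega)}$ child states, and with $O(\omega n)$ nodes in the decomposition the total running time is $(nW)^{O(\omega)}$. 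The main obstacle I would anticipate is precisely identifying the moment at which a vertex's utility becomes knowable; tying that moment to the forget step and combining it with the already-guessed coalition size $n_i$ resolves this cleanly.
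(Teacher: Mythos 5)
Your overall architecture is workable and takes a genuinely different route from the paper's in one respect, but as written there is a concrete gap: the coalition sizes $n_i$ are \emph{guessed final} sizes that are never verified. You finalize $v$'s utility at its forget step as $s_v/n$, where $n$ is the guessed final size of $v$'s coalition; this is only valid if that coalition really ends up with exactly $n$ vertices. Nothing in your state or transitions enforces this: a DP path could place three vertices into a coalition while carrying the guess $n=2$, and every utility $s_v/n$ computed along the way would then overestimate the true utility (for positive $s_v$), so the algorithm could return a value strictly larger than the optimum. The fix is standard --- additionally store, for each active class $C_i$, the number of vertices of $D_i$ accumulated so far (incremented at introduce nodes, combined as $n_i^{(1)}+n_i^{(2)}-|C_i|$ at join nodes) and reject, at the forget step that completes a coalition, any state where the count differs from the guess --- and it keeps the state space within $(nW)^{O(\omega)}$. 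Relatedly, your stated invariant (minimum over vertices whose coalitions are fully contained in $G_t$) does not match your forget rule, which folds $s_v/n$ into the minimum even while $v$'s coalition is still open; the correct invariant for your scheme is the minimum over \emph{all} forgotten vertices, with utilities computed relative to the guessed sizes.

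Once patched, it is worth noting how your approach diverges from the paper's. The paper does not guess final sizes: it lets $n_i$ be the size of $D_i$ accumulated so far and \emph{defers} evaluating utilities until the coalition is completed, which forces it to carry one extra parameter $b_i=\min_{v\in D_i\setminus C_i}\sum_{u\in N(v)\cap D_v}w_{vu}$ per active class --- the minimum final edge-weight sum among already-forgotten members, well defined because all such members share the same (still unknown) denominator. Your eager evaluation trades the $b_i$'s for size guesses plus verification counters. Both yield $(nW)^{O(\omega)}$; yours is conceptually cleaner at the forget step, while the paper's avoids guessing and verification. Also, your per-vertex parameter $s_v$ is exactly the paper's $u_j$, so that ingredient is new only relative to the utilitarian DP of Theorem~\ref{thm:treewidth:weighted}, not relative to the paper's proof of Theorem~\ref{thm:max-min:weighted}.
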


The algorithm in Theorem~\ref{thm:max-min:weighted} is similar to the one in Theorem~\ref{thm:treewidth:weighted} for computing the maximum utilitarian welfare coalition structure. The different point is preserving in the DP table the maximum value of the minimum utility, i.e., maximum egalitarian welfare, instead of maximum utilitarian welfare.

\bigskip

\noindent \textit{Proof of Theorem~\ref{thm:max-min:weighted}.}
For each node $t$, let $\mathcal{C}_t=\{C_1,\ldots,C_{|\mathcal{C}_t|}\}$ be a partition of $X_t$. Each $C_i\in \mathcal{C}_t$ represents the set of vertices in $X_t$ that are contained in the same coalition in $G_t$.
For each $C_i\in \mathcal{C}_t$, let $D_i$ be the coalition in $G_t$ such that $C_i\subseteq D_i$. Moreover, let $\mathcal{D}_t=\{D_1,\ldots,D_{|\mathcal{D}_t|}\}$ for each node $t$.
Let $X_t=\{v_1,\ldots,v_{|X_t|}\}$ and $D_v$ denote the coalition that contains $v$ in $G_t$.
For $D_i\in \mathcal{D}_t$, we define $n_i=|V(D_i)|$, $u_j=\sum_{u\in N(v_j)\cap D_{v_j}}w_{v_ju}$ and $b_i=\min_{v\in D_i\setminus C_i}\{\sum_{u\in N(v)\cap D_v}w_{vu}\}$.  

For each node $t$, we store all the $\mathcal{C}_t$'s and the corresponding $n_i$, $u_j$ and $b_i$ for $1\le i\le |\mathcal{C}_t|$ and $1\le j\le |X_t|$ with the maximum egalitarian welfare. In other words, we define the DP table 
$\DP[t,\mathcal{C}_t,n_1,\ldots,$ $n_{|\mathcal{C}_t|},u_1,\ldots,u_{|X_t|},b_1,\ldots,b_{|\mathcal{C}_t|}]$ 
as the maximum egalitarian  welfare in the induced subgraph $G[V_t\setminus \bigcup_{D_i\in \mathcal{D}_t}D_i]$ with $n_i=|V(D_i)|$, $u_j=\sum_{u\in N(v_j)\cap D_{v_j}}w_{v_ju}$ and $b_i=\min_{v\in D_i\setminus C_i}\{\sum_{u\in N(v)\cap D_v}w_{vu}\}$ for each $C_i\in\mathcal{D}_t$ and $v_j\in X_t$.
Since $|X_r|=0$, $\DP[r,\{\emptyset\},\emptyset,\emptyset,\emptyset]$
represents the maximum egalitarian welfare of $G$.
We recursively compute each $\DP[t,\mathcal{C}_t,n_1,\ldots,$ $n_{|\mathcal{C}|_t},u_1,\ldots,u_{|X_t|},b_1,\ldots,b_{|\mathcal{C}|_t}]$ along a given nice tree decomposition from its leaves.

\paragraph{Leaf node $t$:}
If $t$ is a leaf node, $X_t=\emptyset$. Thus we define $\DP[t,\{\emptyset\},\emptyset,\emptyset,\emptyset]=0$ as the base case.

\paragraph{Introduce node $t$:}
For an introduce node $t$, it has a child node $t'$ such that $X_t=X_{t'}\cup\{v\}$ for some $v\in V$.
Without loss of generality, let $C_{|\mathcal{C}_t|}\in \mathcal{C}_t$ be the coalition that contains $v\in X_t$ and $v=v_{|X_t|}$. 
Let $\mathbf{n}=(n_1,\ldots,n_{|\mathcal{C}_t|}-1)$, $\mathbf{u}=(u'_1,\ldots,u'_{|X_t|-1})$ where $u'_i=u_i-w _{v_iv}$ if $v_i\in C_{|\mathcal{C}_t|}$ and $u'_i=u_i$ otherwise, $\mathbf{b}=(b_1,\ldots,b_{|\mathcal{C}_t|})$,
$\mathbf{n'}=(n_1,\ldots,n_{|\mathcal{C}_t|-1})$, $\mathbf{u'}=(u_1,\ldots,u_{|X_t|-1})$ and $\mathbf{b'}=(b_1,\ldots,b_{|\mathcal{C}_t|-1})$.
Then, the recurrence relation of a DP table is defined as follows:

\begin{flalign}
&\DP[t,\mathcal{C}_t,n_1,\ldots,n_{|\mathcal{C}_t|},u_1,\ldots,u_{|X_t|},b_1,\ldots,b_{|\mathcal{C}_t|}]&\nonumber
\end{flalign}
\begin{empheq}
[left = {
= \empheqlbrace \,}]{alignat=2}
& \DP[t',\mathcal{C}_t\setminus C_{|\mathcal{C}_t|} \cup (\{C_{|\mathcal{C}_t|}\setminus \{v\}\}),\mathbf{n},\mathbf{u},\mathbf{b}]     &   &    \left\{  \begin{aligned}
 & C_{|\mathcal{C}_t|}\neq\{v\},  \\
 & u_{|X_t|}=\sum_{u\in N(v)\cap C_{|\mathcal{C}_t|}}w_{vu} \nonumber
\end{aligned} \right.\\
& \DP[t',\mathcal{C}_t\setminus \{\{v\}\},\mathbf{n'},\mathbf{u'},\mathbf{b'}]     &     &    
\left\{\begin{aligned}
        & C_{|\mathcal{C}_t|}=\{v\},   \\
        & (n_{|\mathcal{C}_t|},u_{|X_t|},b_{|\mathcal{C}_t|})=(1,0,\infty) \nonumber
\end{aligned} \right.\\
& -\infty   &\quad &    \text{otherwise.} \nonumber
\end{empheq}


In the recurrence relation in an introduce node, the first case is when the coalition containing $v$ is not a singleton. In this case, $v$ is added to the existing coalition. Thus, the number of vertices and the sum of weights of edges of each vertex with its neighbors in the coalition increases by $1$ and the sum of weights of edges between $v$ and vertices in the coalition, respectively. 
Note that the set of vertices $D_t\setminus C_{|\mathcal{C}_t|}$, does not change by the definition of an introduce node.
Coalitions except for $C_{|\mathcal{C}_t|}$ do not change in $t$ since they are irrelevant to $v$.  
In the second case, $v$ forms a singleton as a new coalition $C_{|\mathcal{C}_t|}=\{v\}$ in $t$. Coalitions except for $\{v\}$ do not change in $t$.  The third case represents the invalid case.

\paragraph{Forget node $t$:}
Let $t'$ be the child of $t$ and $v$ the node forgotten at $t$. Moreover, let $D_v$ be the coalition that contains $v$ in $t'$ and $C_v$ be the set of vertices that contains $v$ in $X_{t'}$.
Then, the recurrence relation of a DP table is defined as follows:
\begin{align*}
&\DP[t,\mathcal{C}_t,n_1,\ldots,n_{|\mathcal{C}_t|},u_1,\ldots,u_{|X_t|},b_1,\ldots,b_i,\ldots,b_{|\mathcal{C}_t|}]\\
&=
\max\left\{
\begin{array}{c}
\min\{\DP[t',\mathcal{C}_t\cup\{\{v\}\},\mathbf{n^{\prime\prime}},\mathbf{u^{\prime\prime}},\mathbf{b^{\prime\prime}}],\frac{\min\{b'_v,u_v\}}{n_v}\},\\
\max\{\DP[t',\mathcal{C}_t\setminus\{C_i\}\cup\{C_i\cup\{v\}\},\mathbf{n^{\prime\prime\prime}},\mathbf{u^{\prime\prime}},\mathbf{b^{\prime\prime\prime}}]\mid C_i\in\mathcal{C}_t:u_v\ge b_i\}
\end{array} 
\right\},
\end{align*}
where $u_v=\sum_{u\in N(v)\cap D_v}w_{vu}$, $n_v=|D_v|$, $b'_v=\min_{u\in D_v\setminus C_v}\{\sum_{w\in N(u)\cap D_v}w_{uw}\}$, $\mathbf{n^{\prime\prime}}=(n_1,\ldots,n_{|\mathcal{C}_t|},n_v)$,
\ $\mathbf{u^{\prime\prime}}=(u_1,\ldots,u_{|X_t|},u_v)$,\ $\mathbf{b^{\prime\prime}}=(b_1,\ldots,b_{|\mathcal{C}_t|},b'_v)$,
\\ $\mathbf{n^{\prime\prime\prime}}=(n_1,\ldots,n_{|\mathcal{C}_t|})$,
\ and $\mathbf{b^{\prime\prime\prime}}=(b_1,\ldots,b_i,\ldots,b_{|\mathcal{C}_t|})$.

The former case is when $v$ is the last vertex of its coalition, denoted by $D_v\in \mathcal{D}_{t'}$, in $t'$, and $D_v$ disappears in $t$ because $v$ is forgotten in $t$. Thus, $D_v$ never changes thereafter. Then the maximum egalitarian welfare in $G[V_t\setminus \bigcup_{D_i\in \mathcal{D}_t}D_i]$ can be computed by taking the minimum value of the maximum egalitarian welfare in $G[V_{t'}\setminus \bigcup_{D_i\in \mathcal{D}_{t'}}D_i]$ and $\frac{\min\{b'_v,u_v\}}{n_v}$, which is the minimum utility of $D_v$. 
The latter case is when some vertex in $D_v$ appears in $X_t$. In this case, the maximum egalitarian welfare in $G[V_t\setminus \bigcup_{D_i\in \mathcal{D}_t}D_i]$ does not change. Since $v$ is forgotten, no edge incident to $v$ appears after node $t$, and hence $v$ can safely be removed from the coalition $D_v$. 
Note that $u_v$ must be not less than $b_i$ (otherwise, among each vertex in $D_v\setminus X_t$ in $G_t$, the minimum value of the sum of weights of edges with adjacent vertices in the coalition would be $u_v$). 
By taking the maximum value among all the cases, we can compute the maximum egalitarian welfare in $G[V_t\setminus \bigcup_{D_i\in \mathcal{D}_t}D_i]$ with respect to $\mathcal{C}_t$ and $n_1,\ldots,n_{|\mathcal{C}_t|},u_1,\ldots,u_{|X_t|},b_1,\ldots,b_i,\ldots,b_{|\mathcal{C}_t|}$.

\paragraph{Join node $t$:}
Let $t_1,t_2$ be two children of a join node $t$.
At the join node, we integrate the results of $G_{t_1}$ and $G_{t_2}$. It is easily seen that the number of vertices in each coalition $D\in \mathcal{D}_t$ in $G_t$ and the sum of weights of edges with adjacent vertices belonging to the same coalition in $G_t$ for each vertex can be computed by summing up them in $G_{t_1}$ and $G_{t_2}$ without double counting. Moreover, for each $D_i\setminus C_i$ in $G_t$, the minimum value of the sum of weights of edges among vertices in same coalition can be computed by taking the minimum value of them in $G_{t_1}$ and $G_{t_2}$ respectively. Thus, the recurrence relation for join nodes is defined as follows: 
\begin{align*}
&\DP[t,\mathcal{C}_t,n_1,\ldots,n_{|\mathcal{C}_t|},u_1,\ldots,u_{|X_t|},b_1,\ldots,b_{|\mathcal{C}_t|}]\\
&=
\max\left\{
\begin{array}{c}
\min\{\DP[t_1,\mathcal{C}_t,n'_1,\ldots,n'_{|\mathcal{C}_t|},u'_1,\ldots,u'_{|X_t|},b'_1,\ldots,b'_{|\mathcal{C}_t|}],\\
\DP[t_2,\mathcal{C}_t,n^{\prime\prime}_1,\ldots,n^{\prime\prime}_{|\mathcal{C}_t|},u^{\prime\prime}_1,\ldots,u^{\prime\prime}_{|X_t|},b^{\prime\prime}_1,\ldots,b^{\prime\prime}_{|\mathcal{C}_t|}]\}
\end{array} 
\right\},
\end{align*}
where $n_i=n'_i+n^{\prime\prime}_i-|C_i|$, $u_j=u'_j+u^{\prime\prime}_j-\sum_{u\in C_i\cap N(v_j)}w_{uv_j}$, and $b_i=\min\{b'_i,b^{\prime\prime}_i\}$ for each $C_i\in \mathcal{C}_t$ and $v_j\in X_t$.

\paragraph{Complexity:}
Since $0\le n_i \le n$, $-nW \le u_j\le nW$ and $-nW \le b_i\le nW$, the size of a DP table of each node is $\omega^{O(\omega)}\cdot (nW)^{O(\omega)} =(nW)^{O(\omega)}$. Each recurrence relation can be computed in time $n^{O(\omega)}$. Since the number of nodes is $O(n\omega)$, the total running time is $(nW)^{O(\omega)}$. \qed
\bigskip

Since $\omega\le \vc$ holds for any graph, we have the following corollary.
\begin{corollary}\label{cor:Egal:unweighted:vc}
\textsc{Egalitarian Welfare Maximization} can be computed in time $(nW)^{O(\vc)}$ where $\vc$ is the vertex cover number of an input graph and $W$ is the maximum absolute weight of edges.
\end{corollary}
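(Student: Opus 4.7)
The plan is to derive this corollary immediately from Theorem~\ref{thm:max-min:weighted} by invoking the standard relation between treewidth and vertex cover number.

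First, I would recall that for any graph $G$ we have $\omega(G) \le \vc(G)$. To see this explicitly, let $S$ be a minimum vertex cover of $G$ with $|S|=\vc$, and enumerate the non-cover vertices as $V\setminus S = \{v_1, \ldots, v_{n-\vc}\}$. Build a path decomposition $\mathcal{T}=(T,\{X_t\}_{t\in V(T)})$ whose underlying tree $T$ is a path with one node $t_i$ per vertex $v_i$, and set $X_{t_i} = S \cup \{v_i\}$ (with a single bag $X_{t_1}=S$ if $V\setminus S = \emptyset$). Every vertex appears in a bag; each edge has at least one endpoint in $S$ and is therefore contained in some bag $X_{t_i}$; and the bags containing any fixed vertex form a contiguous subpath of $T$. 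Each bag has size at most $\vc+1$, so this is a valid tree decomposition of width at most $\vc$, witnessing $\omega(G) \le \vc(G)$.

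Given an input graph $G$, I would then compute such a tree decomposition of width at most $\vc$ directly from a minimum vertex cover (or from any vertex cover, since even a $2$-approximation suffices for the asymptotic bound), and feed it into the dynamic programming algorithm of Theorem~\ref{thm:max-min:weighted}. By that theorem, the running time becomes $(nW)^{O(\omega)}$, and since $\omega \le \vc$, this is bounded by $(nW)^{O(\vc)}$, yielding the corollary.

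There is no real obstacle here: the entire content of the corollary is the inequality $\omega \le \vc$ combined with the already-established Theorem~\ref{thm:max-min:weighted}. The only point worth being careful about is to note that, although an optimal tree decomposition is hard to compute in general, the explicit construction above produces a decomposition of width exactly $\vc$ in polynomial time from a given vertex cover, which is all that is required for the stated running-time bound.
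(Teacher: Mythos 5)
Your proposal is correct and matches the paper's argument, which derives the corollary in one line from Theorem~\ref{thm:max-min:weighted} via the inequality $\omega \le \vc$. Your additional detail — the explicit path decomposition with bags $S \cup \{v_i\}$ and the remark that it is computable in polynomial time from an (approximate) vertex cover — is a sound and slightly more careful elaboration of the same step.
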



On the other hand, even restricting $\vc=4$, we can show that \textsc{Egalitarian Welfare Maximization} is NP-hard.

\begin{theorem}\label{thm:Egal:vc}
\textsc{Egalitarian Welfare Maximization} is NP-hard on graphs of bounded vertex cover number. 
\end{theorem}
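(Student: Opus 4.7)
The plan is to give a pseudopolynomial reduction from \textsc{Partition} (weakly NP-hard), or, if it is notationally more convenient, from the \textsc{Equal-Cardinality Partition} variant where $|A|=|B|$ is additionally required. Given positive integers $a_1,\ldots,a_n$ with $\sum_{i=1}^n a_i = 2T$, I would construct in polynomial time a weighted graph $G$ with a vertex cover $S=\{s_1,s_2,s_3,s_4\}$ of size $4$ and a rational threshold $\theta$ such that $G$ admits a coalition structure $\cC$ with $\ew(\cC)\ge \theta$ iff a balanced partition exists.

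The construction takes $S$ as four ``hub'' vertices and, for each item $i$, introduces an independent vertex $v_i$ adjacent only to hubs. The hubs $s_1$ and $s_2$ play the role of the two bins; $s_3$ and $s_4$ together with a carefully chosen pattern of heavy-weight edges on $S$ serve as a \emph{rigidifying gadget} that forces every optimal coalition structure to place $s_1$ and $s_2$ in two distinct coalitions $C_1\ni s_1$, $C_2\ni s_2$ of prescribed equal size. Each item vertex $v_i$ is attached to both $s_1$ and $s_2$ by an edge of weight $M + a_i$ for a large common offset $M$, so choosing whether $v_i$ joins $C_1$ or $C_2$ encodes membership of item $i$ in $A$ or its complement. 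With $|C_1|=|C_2|$ fixed, the utilities $U(s_1,C_1)$ and $U(s_2,C_2)$ are affine functions of $\sum_{i\in A} a_i$ and $2T-\sum_{i\in A} a_i$ respectively; they attain a common value $\theta$ exactly when $\sum_{i\in A} a_i = T$, and $\min\{U(s_1,C_1),U(s_2,C_2)\}$ strictly drops below $\theta$ otherwise. Choosing $M$ polynomially large in $\sum_i a_i$ guarantees that every item vertex enjoys utility strictly above $\theta$, so that the egalitarian welfare is governed by the hub utilities. Since the weights are polynomial in the \emph{values} $a_i$ rather than their encoding length, this matches the pseudopolynomial-time upper bound given by Corollary~\ref{cor:Egal:unweighted:vc}.

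The main obstacle is proving the \emph{rigidity lemma}: in any coalition structure $\cC$ with $\ew(\cC)\ge \theta$, the hubs must sit in the intended configuration, and every $v_i$ must join either $C_1$ or $C_2$. This demands a case analysis over all qualitatively different hub-coalition patterns (merging $s_1$ and $s_2$, placing $s_3$ or $s_4$ alone with items, singleton coalitions, etc.), showing that in each such deviation some agent -- typically a hub whose incident heavy edge is ``wasted'' or over-diluted by coalition size -- has utility strictly below $\theta$. The heavy inter-hub weights and the offset $M$ must therefore be calibrated so that the only configurations of egalitarian welfare $\ge \theta$ are the intended ones, after which the equivalence with \textsc{Partition} is immediate.
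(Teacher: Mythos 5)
Your high-level plan is the same as the paper's: a reduction from equal-cardinality \textsc{Partition}, a vertex cover of four hubs, item vertices adjacent only to hubs, and two bin coalitions of forced equal size whose hub utilities are affine in $\sum_{i\in A}a_i$. However, two essential ingredients are missing or would fail as stated. The first is your control of the item vertices' utilities. If $v_i$ is joined to $s_1$ and $s_2$ by weight $M+a_i$ and ends up in $C_1\not\ni s_2$, its utility is $(M+a_i)/|C_1|$, while the hub $s_1$ collects roughly $(n/2)\cdot M$ from its $n/2$ items, so the threshold $\theta$ is on the order of $(n/2)M/|C_1|$. Increasing $M$ therefore pushes the item utilities \emph{further below} $\theta$, not above it, contrary to your claim; worse, $(M+a_i)/|C_1|$ depends on $a_i$, so the egalitarian minimum would be attained at the item with smallest $a_i$ rather than at the hubs, and the encoding collapses. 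The paper resolves exactly this by placing \emph{two} hubs in each bin ($v_j$ and $w_j$) and giving each item complementary weights $(n/2+2)a_i$ to $v_j$ and $(n/2+2)((n+7/2)W-a_i)$ to $w_j$, so that every item's utility is the constant $(n+7/2)W=\theta$ independently of $a_i$; the prefactor $n/2+2$ cancels the division by the coalition size. Some such complementary-weight device is unavoidable in your construction as well.

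The second gap is the rigidity lemma itself, which you correctly identify as the crux but leave entirely unconstructed. The paper's gadget is concrete: a single edge of large \emph{negative} weight between the two bin hubs forces them into distinct coalitions (any coalition containing both has an agent below $\theta$); a heavy positive edge glues $v_j$ to $w_j$; every item must join one of the two bins or get utility $0$; and the equal-size condition $|C_1|=|C_2|=n/2+2$ then follows from the item utilities dropping below $\theta$ in any larger coalition. Your sketch speaks only of ``heavy-weight edges on $S$,'' which suggests positive weights; with positive weights alone it is unclear how to forbid the configuration in which $s_1$ and $s_2$ merge into one coalition absorbing all items. Without the negative edge (or a substitute mechanism), the case analysis you defer is not just tedious but potentially unwinnable.
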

\begin{figure}
   \centering
   \includegraphics[width=0.7\textwidth]{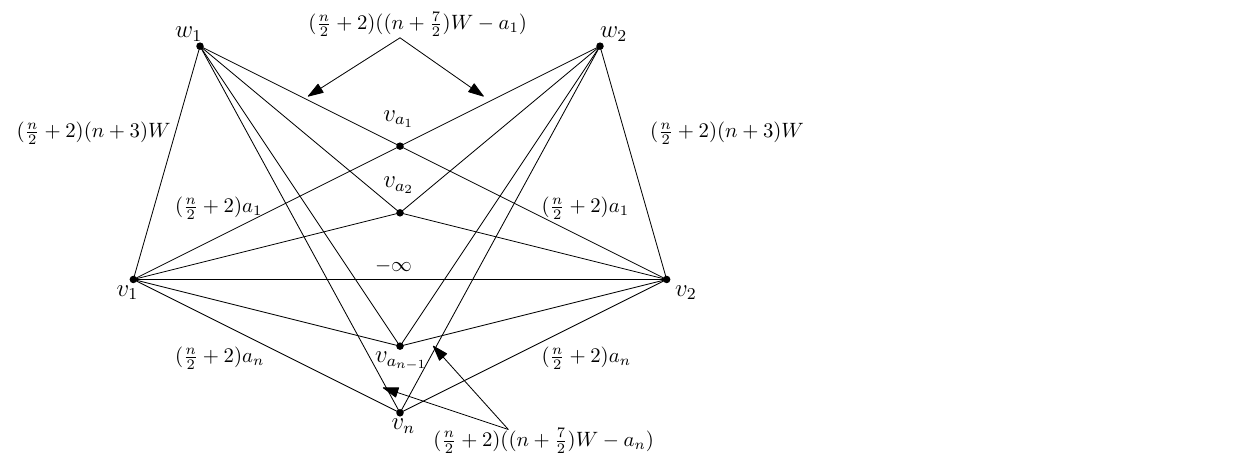}
   \caption{The constructed graph $G$.}
   \label{fig:thm:Egal:vc}
\end{figure}
\begin{proof}
We give a reduction from \textsc{Partition}.
In the problem, given a set $A=\{a_1,\ldots,a_n\}$ of $n$ integers, the task is to determine whether there exists a subset $A'\subseteq A$ such that $\sum_{a\in A'}=W/2$ where $W=\sum_{a\in A}a$.
This problem is NP-hard if we require $|A'|=n/2$~\cite{DBLP:books/fm/GareyJ79}. Thus, we suppose that $|A|$ is even.

From an instance of \textsc{Partition}, we construct an instance of \textsc{Egalitarian Welfare Maximization} (see Figure \ref{fig:thm:Egal:vc}).
First, we create four vertices $v_1,v_2,w_1,w_2$ and $n$ vertices $v_{a_1},\ldots,v_{a_n}$ corresponding to $n$ integers. Let $V_A=\{v_{a_1},\ldots,v_{a_n}\}$. 
Two vertices $v_1$ and $v_2$ are connected to each $v_{a_i}\in V_A$ by edges $\{v_1,v_a\}$ and $\{v_2,v_a\}$ with weight $(n/2+2)a_i$, respectively.
For $w_1$ and $w_2$, we connect them to  $v_{a_i}\in V_A$ by edges with weight $(n/2+2)((n+7/2)W-a_i)$. Moreover, we add edges $\{v_1,w_1\}$ and  $\{v_2,w_2\}$ with weight $(n/2+2)(n+3)W$.
Finally, we connect $v_1$ and $v_2$ by a large negative weight edge. The weight $-\infty$ denotes a large negative weight.
Let $G=(V,E)$ be the constructed graph.
Then $G$ has a vertex cover $\{v_1, v_2, w_1,w_2\}$ of size $4$ because the graph obtained by deleting them forms singletons.

We show that there exists a subset $A'\subseteq A$ such that $\sum_{a\in A'}=W/2$ if and only if there exists a partition $\mathcal{P}$ of $V$ such that the least utility among agents is at least  $(n+7/2)W$.

Suppose that we are given $A_1\subseteq A$ such that $\sum_{a\in A_1}=W/2$. Let $A_2 = A\setminus A_1$. Then we denote the corresponding vertex set to $A_1$ and $A_2$ of $V$ by $V_{A_1}$ and $V_{A_2}$, respectively. 
Let $C_1=V_{A_1}\cup\{v_1,w_1\}$ and $C_2=V_{A_2}\cup\{v_2,w_2\}$. Note that $|C_1|=|C_2|=n/2+2$.
Define a partition $\mathcal{P}=\{C_1, C_2\}$.
We show that $\mathcal{P}$ is a partition  of $V$ such that the least utility among agents is at least  $(n+7/2)W$.
For $w_1$, the utility is at least $(n+3)W +(n+7/2)W - W/2>(n+7/2)W$. Similarly, the utility of $w_2$ is at least $(n+7/2)W$.
For each $v_{a_i}\in V_A$, the utility is $(n+7/2)W$.
Finally, the utility of $v_1$ (resp., $v_2$) is $(n+3)W+W/2=(n+7/2)W$.

Conversely, we are given a partition $\mathcal{P}$ of $V$ such that the least utility among agents is at least  $(n+7/2)W$.
Since the edge between $v_1$ and $v_2$ has large negative weight, they belong to different coalitions in $\mathcal{P}$, otherwise, the utility is less than $(n+7/2)W$.

If $v_i$ and $w_i$ belong to different coalition for $i\in \{1,2\}$, the utility of  $v_i$ is at most $(n+2)W< (n+7/2)W$.
Thus, $v_i$ and $w_i$ belong to the same coalition. Let $C_i$ be the coalition containing $v_i$ and $w_i$ for each $i$. 
Next, if there exists a vertex $v_{a}\in V_A$ not belonging to both $C_1$ and $C_2$, its utility becomes $0$. Therefore, each vertex in $V_A$ belongs to either $C_1$ or $C_2$ and $\mathcal{P}$ consists of two coalitions $C_1$ and $C_2$.

We then show that $|C_1|=|C_2|=n/2+2$. If not, without loss of generality, we suppose that $|C_1|>n/2+2$.
For a vertex $v_{a_i}\in V_A$, the utility is at most $\left((n/2+2)((n+7/2)W-a_i)+(n/2+2)a_i)\right)/(n/2+3)=\left((n/2+2)(n+7/2)W\right)\\/(n/2+3)<(n+7/2)W$. Thus, we have $|C_1|=|C_2|=n/2+2$.

Let $V_{A_1}=C_1\cap V_A$ and $V_{A_2}=C_2\cap V_A$. Moreover, $A_1$ and $A_2$ denotes the subsets of $A$ corresponding to $V_{A_1}$ and $V_{A_2}$, respectively. Note that $|A_1|=|A_2|=n/2$.
Then the sum of weights of edges incident to $v_1$ in $C_1$ is  $(n/2+2)\sum_{a\in A_1} + (n/2+2)(n+3)W$.
Suppose that the utility of $v_1$ is more than $(n+7/2)W$.
As $|C_1|=n/2+2$ and the weight of $\{v_1,w_1\}$ is $(n/2+2)(n+3)W$, we have  $\sum_{a\in A_1}>W/2$. Then  $\sum_{a\in A_2} < W/2$. This implies that $u(v_2)$ is less than $(n+7/2)W$. 
Therefore, the utility of $v_1$ is exactly $(n+7/2)W$. Eventually, $\sum_{a\in A_1}a = \sum_{a\in A_2}a = W/2$.
This completes the proof.
\end{proof}

\printbibliography
\end{document}